\newenvironment{proof}{\textbf{Proof:}}{\hfill$\Box$\newline}
\tikzstyle{env}=[copoint,regular polygon rotate=0,minimum width=0.2cm, fill=black]
\tikzstyle{every picture}=[baseline=-0.25em]
\tikzstyle{dotpic}=[scale=0.5]
\tikzstyle{diredges}=[every to/.style={diredge}]
\tikzstyle{dot graph}=[shorten <=-0.1mm,shorten >=-0.1mm,scale=0.6]
\tikzstyle{plot point}=[circle,fill=black,minimum width=2mm,inner sep=0]
\tikzstyle{braceedge}=[decorate,decoration={brace,amplitude=2mm,raise=-1mm}]
\tikzstyle{small braceedge}=[decorate,decoration={brace,amplitude=1mm,raise=-1mm}]
\tikzstyle{left hook arrow}=[left hook-latex]
\tikzstyle{right hook arrow}=[right hook-latex]
\tikzstyle{dtriangle}=[fill=yellow,draw=black,shape=isosceles triangle,shape border rotate=-90,isosceles triangle stretches=true,inner sep=0.8pt,minimum width=0.25cm,minimum height=2mm]
\tikzstyle{vtriang}=[fill=yellow,draw=black,shape=isosceles triangle,shape border rotate=180,isosceles triangle stretches=true,inner sep=0.8pt,minimum width=0.25cm,minimum height=2mm]
\tikzstyle{vrt}=[fill=yellow,draw=black,shape=isosceles triangle,shape border rotate=0,isosceles triangle stretches=true,inner sep=0.8pt,minimum width=0.25cm,minimum height=2mm]
\tikzstyle{H box}=[rectangle,fill=yellow,draw=black,xscale=0.8,yscale=0.8, inner sep=0.6pt]
\tikzstyle{gbox}=[rectangle,fill=green,draw=black,xscale=1.0,yscale=1.0, inner sep=1.pt]
\tikzstyle{rbox}=[rectangle,fill=red,draw=black,xscale=1.0,yscale=1.0, inner sep=1.pt]
\tikzstyle{triangle}=[fill=yellow,draw=black,shape=isosceles triangle,shape border rotate=90,isosceles triangle stretches=true,inner sep=0.8pt,minimum width=0.25cm,minimum height=2mm]
\tikzstyle{bn}=[circle,fill=black,draw=black,scale=.8]
\tikzstyle{wn}=[circle,fill=white,draw=black,scale=.6]
\tikzstyle{dn}=[circle,fill=none,draw=gray]
\tikzstyle{bspider}=[fill=black,draw=black,scale=1,shape=isosceles triangle,shape border rotate=-90,isosceles triangle stretches=true,inner sep=1pt,minimum width=0.4cm,minimum height=3mm]
\tikzstyle{dbspider}=[fill=black,draw=black,scale=1,shape=isosceles triangle,shape border rotate=90,isosceles triangle stretches=true,inner sep=1pt,minimum width=0.4cm,minimum height=3mm]
\tikzstyle{L}=[rectangle,shape=rectangle,fill=green,draw=black]
\tikzstyle{Z dot}=[inner sep=0mm, minimum size=2mm, shape=circle, draw=black, fill={rgb,255: red,221; green,255; blue,221}]
\tikzstyle{Z phase dot}=[minimum size=5mm, font={\footnotesize\boldmath}, shape=rectangle, rounded corners=2mm, inner sep=0.2mm, outer sep=-2mm, scale=0.8, draw=black, fill={rgb,255: red,221; green,255; blue,221}]
\tikzstyle{X dot}=[Z dot, shape=circle, draw=black, fill={rgb,255: red,255; green,136; blue,136}]
\tikzstyle{X phase dot}=[Z phase dot, fill={rgb,255: red,255; green,136; blue,136}, font={\footnotesize\boldmath}]
\tikzstyle{hadamard edge}=[-, dashed, dash pattern=on 2pt off 0.5pt, thick, draw={rgb,255: red,68; green,136; blue,255}]
\tikzstyle{black dot}=[inner sep=0.7mm,minimum width=0pt,minimum height=0pt,fill=black,draw=black,shape=circle]
\tikzstyle{dot}=[black dot]
\tikzstyle{smalldot}=[inner sep=0.4mm,minimum width=0pt,minimum height=0pt,fill=black,draw=black,shape=circle]
\tikzstyle{white dot}=[dot,fill=white]
\tikzstyle{antipode}=[white dot,inner sep=0.3mm,font=\footnotesize]
\tikzstyle{smallwhitedot}=[smalldot,fill=white]
\tikzstyle{alt white dot}=[white dot,label={[xshift=3.07mm,yshift=-0.05mm,font=\footnotesize]left:$*$}]
\tikzstyle{gray dot}=[dot,fill=gray!40!white]
\tikzstyle{smallgraydot}=[smalldot,fill=gray!40!white]
\tikzstyle{box vertex}=[draw=black,rectangle]
\tikzstyle{small box}=[box vertex,fill=white]
\tikzstyle{whitebg}=[fill=white,inner sep=2pt]
\tikzstyle{graph state vertex}=[sg vertex,fill=black]
\tikzstyle{wide copoint}=[fill=white,draw=black,shape=isosceles triangle,shape border rotate=90,isosceles triangle stretches=true,inner sep=1pt,minimum width=1.5cm,minimum height=5mm]
\tikzstyle{wide point}=[fill=white,draw=black,shape=isosceles triangle,shape border rotate=-90,isosceles triangle stretches=true,inner sep=1pt,minimum width=1.5cm,minimum height=4mm]
\tikzstyle{very wide copoint}=[fill=white,draw=black,shape=isosceles triangle,shape border rotate=-90,isosceles triangle stretches=true,inner sep=1pt,minimum width=2.5cm,minimum height=4mm]
\tikzstyle{very wide empty copoint}=[draw=black,shape=isosceles triangle,shape border rotate=-90,isosceles triangle stretches=true,inner sep=1pt,minimum width=2.5cm,minimum height=4mm]
\tikzstyle{symm}=[ultra thick,shorten <=-1mm,shorten >=-1mm]
\tikzstyle{square box}=[rectangle,fill=white,draw=black,minimum height=5mm,minimum width=5mm,font=\small]
\tikzstyle{square gray box}=[rectangle,fill=gray!30,draw=black,minimum height=6mm,minimum width=6mm]
\tikzstyle{copoint}=[regular polygon,regular polygon sides=3,draw=black,scale=0.75,inner sep=-0.5pt,minimum width=7mm,fill=white]
\tikzstyle{point}=[regular polygon,regular polygon sides=3,draw=black,scale=0.75,inner sep=-0.5pt,minimum width=7mm,fill=white,regular polygon rotate=180]
\tikzstyle{gray point}=[point,fill=gray!40!white]
\tikzstyle{gray copoint}=[copoint,fill=gray!40!white]
\newcommand{\edgearrow}{{\arrow[black]{>}}}
\newcommand{\edgetick}{{\arrow[black,scale=0.7,very thick]{|}}}
\tikzstyle{diredge}=[->]
\tikzstyle{rdiredge}=[<-]
\tikzstyle{medium diredge}=[->]
\tikzstyle{short diredge}=[->]
\tikzstyle{halfedge}=[-)]
\tikzstyle{other halfedge}=[(-]
\tikzstyle{freeedge}=[(-)]
\tikzstyle{white edge}=[line width=5pt,white]
\tikzstyle{tick}=[postaction=decorate,decoration={markings, mark=at position 0.5 with \edgetick}]
\tikzstyle{small map edge}=[|-latex, gray!60!blue, shorten <=0.9mm, shorten >=0.5mm]
\tikzstyle{thick dashed edge}=[very thick,dashed,gray!40]
\tikzstyle{map edge}=[|-latex,very thick, gray!40, shorten <=1mm, shorten >=0.5mm]
\tikzstyle{tickedge}=[postaction=decorate,
\tikzstyle{dirtickedge}=[postaction=decorate,
\tikzstyle{dirdoubletickedge}=[postaction=decorate,
\newcommand{\boxshape}[3]{%
\pgfdeclareshape{#1}{
\inheritsavedanchors[from=rectangle] 
\inheritanchorborder[from=rectangle]
\inheritanchor[from=rectangle]{center}
\inheritanchor[from=rectangle]{north}
\inheritanchor[from=rectangle]{south}
\inheritanchor[from=rectangle]{west}
\inheritanchor[from=rectangle]{east}
\backgroundpath{
\southwest \pgf@xa=\pgf@x \pgf@ya=\pgf@y
\northeast \pgf@xb=\pgf@x \pgf@yb=\pgf@y

\@tempdima=#2
\@tempdimb=#3

\pgfpathmoveto{\pgfpoint{\pgf@xa - 5pt + \@tempdima}{\pgf@ya}}
\pgfpathlineto{\pgfpoint{\pgf@xa - 5pt - \@tempdima}{\pgf@yb}}
\pgfpathlineto{\pgfpoint{\pgf@xb + 5pt + \@tempdimb}{\pgf@yb}}
\pgfpathlineto{\pgfpoint{\pgf@xb + 5pt - \@tempdimb}{\pgf@ya}}
\pgfpathlineto{\pgfpoint{\pgf@xa - 5pt + \@tempdima}{\pgf@ya}}
\pgfpathclose
}
}}
\tikzstyle{map}=[draw,shape=NEbox,inner sep=7pt]
\tikzstyle{mapdag}=[draw,shape=SEbox,inner sep=7pt]
\tikzstyle{maptrans}=[draw,shape=SWbox,inner sep=7pt]
\tikzstyle{mapconj}=[draw,shape=NWbox,inner sep=7pt]
\tikzstyle{probs}=[shape=semicircle,fill=gray!40!white,draw=black,shape border rotate=180,minimum width=1.2cm]
\tikzstyle{arrs}=[-latex,font=\small,auto]
\tikzstyle{arrow plain}=[arrs]
\tikzstyle{arrow dashed}=[dashed,arrs]
\tikzstyle{arrow bold}=[very thick,arrs]
\tikzstyle{arrow hide}=[draw=white!0,-]
\tikzstyle{arrow reverse}=[latex-]
\tikzstyle{cdnode}=[]
\tikzstyle{gn}=[dot,fill=green,minimum width=0.25cm,inner sep=0pt]
\tikzstyle{rn}=[dot,fill=red,inner sep=0pt,minimum width=0.25cm]
\tikzstyle{rc}=[dot,thick,fill=white,draw = red,minimum width=0.3cm,inner sep=0pt]
\tikzstyle{gc}=[dot,thick,fill=white,draw= green,inner sep=0pt,minimum width=0.3cm]
\tikzstyle{bc}=[dot,thick,fill=white,draw= blue,minimum width=0.3cm]
\tikzstyle{label}=[circle,fill=white,minimum width=0.3cm]
\tikzstyle{clocklabel}=[dot,fill=yellow,draw=black,font=\tiny,inner sep=0.75pt]
\tikzstyle{rsn}=[circle split,draw,fill=red,font=\tiny,inner sep=0.75pt]
\tikzstyle{gsn}=[circle split,draw,fill=green,font=\tiny,inner sep=0.75pt]
\tikzstyle{bsn}=[circle split,draw,fill=blue,font=\tiny,inner sep=0.75pt]
\tikzstyle{rsc}=[circle split,thick,draw= red,draw,fill=white,font=\tiny,inner sep=0.75pt]
\tikzstyle{gsc}=[circle split,thick,draw= green,draw,fill=white,font=\tiny,inner sep=0.75pt]
\tikzstyle{bsc}=[circle split,thick,draw= blue,draw,fill=white,font=\tiny,inner sep=0.75pt]
\tikzstyle{cnot}=[fill=white,shape=circle,inner sep=-1.4pt]
\tikzstyle{wire label}=[font=\tiny, auto]
\tikzstyle{square box2}=[rectangle,fill=white,draw=black,minimum height=5mm,minimum width=10mm,font=\small]
\tikzstyle{scalar}=[rectangle,shape=diamond,fill=white,draw=black, inner sep=0.8pt]
\newcommand{\bra}[1]{\ensuremath{\left\langle #1 \right|}}
\newcommand{\ket}[1]{\ensuremath{\left|  #1 \right\rangle}}
\tikzstyle{cdiag}=[matrix of math nodes, row sep=3em, column sep=3em, text height=1.5ex, text depth=0.25ex,inner sep=0.5em]
\tikzstyle{arrow above}=[transform canvas={yshift=0.5ex}]
\tikzstyle{arrow below}=[transform canvas={yshift=-0.5ex}]
\newtheorem{Th}{Theorem}[section]
\newtheorem{theorem}[Th]{Theorem}
\newtheorem{lemma}[Th]{Lemma}
\newtheorem{remark}[Th]{Remark}
\newcommand{\vast}{\bBigg@{6.5}}
\newcommand{\vertrule}[1][1ex]{\rule{.4pt}{#1}}
\newcommand{\rddots}{\rotatebox{90}{$\ddots$}}
\title{ A non-anyonic qudit ZW-calculus }
\author{Quanlong Wang\\
Cambridge Quantum Computing Ltd. \\ harny.wang@cambridgequantum.com}
\begin{document}
\date{}\maketitle
\begin{abstract}
ZW-calculus is a useful graphical language for pure qubit quantum computing. It is via the translation of the completeness of ZW-calculus that the first proof of completeness of ZX-calculus was obtained. A d-level generalisation of qubit  ZW-calculus (anyonic qudit ZW-calculus) has been given in \cite{amar} which is universal for pure qudit quantum computing. However, the interpretation of the W spider in this type of ZW-calculus has so-called  q-binomial coefficients involved, thus makes computation quite complicated. In this paper, we give a new type of qudit ZW-calculus which has generators and rewriting rules similar to that of the qubit ZW-calculus. Especially, the Z spider is exactly the same as that of the qudit ZX-calculus as given in \cite{wangqufinite}, and the new W spider has much simpler interpretation as a linear map.  Furthermore, we establish a translation between this qudit ZW-calculus and  the qudit ZX-calculus which is universal as shown in \cite{wangqufinite}, therefore this qudit ZW-calculus is also universal for pure qudit quantum computing.
\end{abstract}

\section{Introduction}
As is well known, everything in the world can be seen as a process \cite{Whitehead1929}. But do we have a mathematical framework for general processes? Fortunately yes, symmetric monoidal category theory (a.k.a. process theory ) is such a formalism. In process theory, each process is a change from something typed as $A$ to something typed as $B$, which is denoted by a diagram read from top to bottom: 

$$%
\beginpgfgraphicnamed{TikZit//singleprocess2}
\InputIfFileExists{TikZit//singleprocess2.tikz}{}{\input{./figures/TikZit//singleprocess2.tikz}}
\endpgfgraphicnamed $$

Two processes that happened  sequentially are denoted as

 $$%
\beginpgfgraphicnamed{TikZit//sequentialproc2}
\InputIfFileExists{TikZit//sequentialproc2.tikz}{}{\input{./figures/TikZit//sequentialproc2.tikz}}
\endpgfgraphicnamed $$
  
 While simultaneously happened processes are depicted as
 
 $$ %
\beginpgfgraphicnamed{TikZit//parallelpro2}
\InputIfFileExists{TikZit//parallelpro2.tikz}{}{\input{./figures/TikZit//parallelpro2.tikz}}
\endpgfgraphicnamed $$
 
Therefore any two processes are either in one same sequential train of boxes or in two parallel sequential train of boxes respectively.  Note that we have put  the two simultaneously happened processes in parallel in a plane, so one has to be on the left and the other has to be  on the right. Then the symmetry process $%
\beginpgfgraphicnamed{TikZit//swap2v}
\InputIfFileExists{TikZit//swap2v.tikz}{}{\input{./figures/TikZit//swap2v.tikz}}
\endpgfgraphicnamed$ in process theory comes to the rescue to break this artificial asymmetry in a way that  if we swap the positions of the two boxes, they should be essentially the same while all the types still match:
  $$ %
\beginpgfgraphicnamed{TikZit//swap2process2}
\InputIfFileExists{TikZit//swap2process2.tikz}{}{\input{./figures/TikZit//swap2process2.tikz}}
\endpgfgraphicnamed $$
  
In addition to the left-right symmetry, we would also like to have an up-down symmetry:  when we turn a diagram upside-down, then essentially the property of the corresponding process won't be changed if the types are kept matching. This can be realised by a self-dual compact structure \cite{Coeckebk} in the following way: 
$$ %
\beginpgfgraphicnamed{TikZit//topdownsymmetry}
\InputIfFileExists{TikZit//topdownsymmetry.tikz}{}{\input{./figures/TikZit//topdownsymmetry.tikz}}
\endpgfgraphicnamed $$
 
where a self-dual compact structure compatible with the symmetry means 

 \[
\beginpgfgraphicnamed{TikZit//compactstructure_capdit}
\InputIfFileExists{TikZit//compactstructure_capdit.tikz}{}{\input{./figures/TikZit//compactstructure_capdit.tikz}}
\endpgfgraphicnamed \qquad\quad %
\beginpgfgraphicnamed{TikZit//compactstructure_cupdit}
\InputIfFileExists{TikZit//compactstructure_cupdit.tikz}{}{\input{./figures/TikZit//compactstructure_cupdit.tikz}}
\endpgfgraphicnamed \]
\[ %
\beginpgfgraphicnamed{TikZit//compactstructure_snakedit}
\InputIfFileExists{TikZit//compactstructure_snakedit.tikz}{}{\input{./figures/TikZit//compactstructure_snakedit.tikz}}
\endpgfgraphicnamed\]

To summarise, we now obtained a compact closed category which could characterise the interaction of general processes and composes the backbone of categorical quantum mechanics \cite{Coeckesamson}. However,  if we want to apply process theory to  a particular problem like quantum circuit optimisation or quantum entanglement classification, then quite probably we may need a fine-grained version of process theory where all the boxes are explicitly detailed, two typical examples of such a version are ZX-calculus \cite{CoeckeDuncan} and ZW-calculus \cite{amar1, amarngwanglics}. 

ZW-calculus is a graphical language which characterises relations between the GHZ and W algebras \cite{Coeckealeksmen} while using undirected diagrams. It has been shown that ZW-calculus is a very useful tool in dealing with the multipartite qubit entanglement classification problem  \cite{Coeckealeksmen, amar}. However, when it is generalised to an arbitrary dimension $d$ in the anyonic line as a standard generalisation of the fermionic line \cite{amar}, the resulted qudit ZW-calculus has complicated interpretation for its $W$ node and white phase node:
 $$\left\llbracket %
\beginpgfgraphicnamed{TikZit/blacktriangledit}
\begin{tikzpicture}
	\begin{pgfonlayer}{nodelayer}
		\node [style=none] (0) at (0.25, -0.5) {};
		\node [style=none] (1) at (0, 0.5) {};
		\node [style=none] (2) at (-0.25, -0.5) {};
		\node [style=dbspider] (3) at (0, 0) {};
	\end{pgfonlayer}
	\begin{pgfonlayer}{edgelayer}
		\draw (1.center) to (3);
		\draw (3) to (2.center);
		\draw (3) to (0.center);
	\end{pgfonlayer}
\end{tikzpicture}}
\endpgfgraphicnamed \right\rrbracket= \sum_{n=0}^{d-1}\sum_{k=0}^{n} \sqrt{\binom{n}{k}_q}\ket{k} \ket{n-k} \bra{n},  \left\llbracket %
\beginpgfgraphicnamed{TikZit/whitephaselambda}
\InputIfFileExists{TikZit/whitephaselambda.tikz}{}{\input{./figures/TikZit/whitephaselambda.tikz}}
\endpgfgraphicnamed \right\rrbracket=  \sum_{n=0}^{d-1}\lambda^n\sqrt{[n]_q !} \ket{n} \ket{n}\bra{n},$$
where
 \begin{equation*}
q = e^{2 \pi i/d}, ~~ [j]_q = \sum_{k=0}^{j-1} q^k, ~~\binom{n}{k}_q = \frac{[n]_q!}{[k]_q ! [n-k]_q !}.
\end{equation*}

To avoid complicated coefficients like q-binomial numbers, in this paper we propose a non-anyionic qudit ZW-calculus for arbitrary dimension $d$. The main feature of this non-anyionic qudit ZW-calculus lies in that the non-zero coefficients of the interpretation of the $W$ node are just $1$, and the coefficients of the interpretation of the white phase node are independent arbitrary complex numbers (the first coefficient is fixed to be $1$) which is the same as that for the $Z$ spider as given in algebraic qudit ZX-calculus  \cite{wangqufinite}. The other generators of the non-anyionic qudit ZW-calculus are the same as deployed in the anyionic qudit ZW-calculus  \cite{amar}.  Furthermore, we obtain a set of rewriting rules and their derivations which can be seen as a direct generalisation of rules for fermionic qubit ZW-calculus \cite{amarngwanglics}. As a consequence, we get a bilateral translation between the non-anyionic qudit ZW-calculus and algebraic qudit ZX-calculus, therefore the universality of the former is inherited from the latter. Finally, we unify all the finite dimensional non-anyionic  ZW-calculi into a single framework called qufinite ZW-calculus in the same way as that for the qufinite ZX-calculus.


\section{ZW-calculus}
ZW-calculus is a PROP which could be represented by generators and rewriting rules  \cite{BaezCR}. In this section we give the generators and rewriting rules for qudit ZW-calculus based on any dimension $d \geq 2$.

\FloatBarrier
 \begin{table}
\begin{center}
\begin{tabular}{|r@{~}r@{~}c@{~}c|r@{~}r@{~}c@{~}c|}
\hline
$R_{Z,\overrightarrow{a}}^{(n,m)}$&$:$&$n\to m$ & %
\beginpgfgraphicnamed{TikZit/whitespiderdit}
\InputIfFileExists{TikZit/whitespiderdit.tikz}{}{\input{./figures/TikZit/whitespiderdit.tikz}}
\endpgfgraphicnamed  & W&$:$ &$1\to 2$ &%
\beginpgfgraphicnamed{TikZit/blacktriangledit}
\InputIfFileExists{TikZit/blacktriangledit.tikz}{}{\input{./figures/TikZit/blacktriangledit.tikz}}
\endpgfgraphicnamed \\ \hline
   $\tau$&$:$&$2\to 2$&%
\beginpgfgraphicnamed{TikZit/braid}
\InputIfFileExists{TikZit/braid.tikz}{}{\input{./figures/TikZit/braid.tikz}}
\endpgfgraphicnamed &   $\tau^{-1}$&$:$&$ 2\to 2$& %
\beginpgfgraphicnamed{TikZit/braidd2}
\InputIfFileExists{TikZit/braidd2.tikz}{}{\input{./figures/TikZit/braidd2.tikz}}
\endpgfgraphicnamed\\ \hline

   $\mathbb I$&$:$&$1\to 1$&%
\beginpgfgraphicnamed{TikZit/Id}
\InputIfFileExists{TikZit/Id.tikz}{}{\input{./figures/TikZit/Id.tikz}}
\endpgfgraphicnamed &   $\sigma$&$:$&$ 2\to 2$& %
\beginpgfgraphicnamed{TikZit/swap}
\InputIfFileExists{TikZit/swap.tikz}{}{\input{./figures/TikZit/swap.tikz}}
\endpgfgraphicnamed\\ \hline
   $C_a$&$:$&$ 0\to 2$& %
\beginpgfgraphicnamed{TikZit/cap}
\InputIfFileExists{TikZit/cap.tikz}{}{\input{./figures/TikZit/cap.tikz}}
\endpgfgraphicnamed &$ C_u$&$:$&$ 2\to 0$&%
\beginpgfgraphicnamed{TikZit/cup}
\InputIfFileExists{TikZit/cup.tikz}{}{\input{./figures/TikZit/cup.tikz}}
\endpgfgraphicnamed \\\hline
\end{tabular}\caption{ Generators of qudit ZW-calculus, where $m,n\in \mathbb N$, $\protect\overrightarrow{a}=(a_1, \cdots, a_{d-1}), a_i \in \mathbb C. $} \label{qbzxgeneratordit}
\end{center}
\end{table}
\FloatBarrier

We  define the empty diagram and some other diagrams in terms of the generators as follows: 
 \begin{center}\[
\beginpgfgraphicnamed{TikZit/zwditshortnotecodot}
\InputIfFileExists{TikZit/zwditshortnotecodot.tikz}{}{\input{./figures/TikZit/zwditshortnotecodot.tikz}}
\endpgfgraphicnamed \quad\quad\quad\quad %
\beginpgfgraphicnamed{TikZit/downblacktriangle}
\InputIfFileExists{TikZit/downblacktriangle.tikz}{}{\input{./figures/TikZit/downblacktriangle.tikz}}
\endpgfgraphicnamed\quad\quad\quad\quad  %
\beginpgfgraphicnamed{TikZit/unitdowanblack}
\InputIfFileExists{TikZit/unitdowanblack.tikz}{}{\input{./figures/TikZit/unitdowanblack.tikz}}
\endpgfgraphicnamed 
\]\end{center} 
\[   %
\beginpgfgraphicnamed{TikZit/whitenophasedef}
\InputIfFileExists{TikZit/whitenophasedef.tikz}{}{\input{./figures/TikZit/whitenophasedef.tikz}}
\endpgfgraphicnamed \quad\quad\quad\quad   %
\beginpgfgraphicnamed{TikZit/emptysquare}
\InputIfFileExists{TikZit/emptysquare.tikz}{}{\input{./figures/TikZit/emptysquare.tikz}}
\endpgfgraphicnamed:= \]
where $\overrightarrow{1}=\overbrace{(1,\cdots,1)}^{d-1}$.

 The diagrams of the qudit ZX-calculus have the following standard interpretation $ \left\llbracket \cdot \right\rrbracket$:

 \[
 \left\llbracket %
\beginpgfgraphicnamed{TikZit/whitespiderdit}
\InputIfFileExists{TikZit/whitespiderdit.tikz}{}{\input{./figures/TikZit/whitespiderdit.tikz}}
\endpgfgraphicnamed \right\rrbracket=\sum_{j=0}^{d-1}a_j\ket{j}^{\otimes m}\bra{j}^{\otimes n}, a_0=1,  \protect\overrightarrow{a}=(a_1, \cdots, a_{d-1}), a_i \in \mathbb C,
\]

 \[
\left\llbracket %
\beginpgfgraphicnamed{TikZit/blacktriangledit}
}
\endpgfgraphicnamed \right\rrbracket=\ket{00}\bra{0}+\sum_{i=1}^{d-1}(\ket{0i}+\ket{i0})\bra{i}, \quad \left\llbracket%
\beginpgfgraphicnamed{TikZit/blackcodot}
\begin{tikzpicture}
	\begin{pgfonlayer}{nodelayer}
		\node [style=none] (0) at (0, 0.25) {};
		\node [style=dbspider] (1) at (0, -0.25) {};
	\end{pgfonlayer}
	\begin{pgfonlayer}{edgelayer}
		\draw (0.center) to (1);
	\end{pgfonlayer}
\end{tikzpicture}}
\endpgfgraphicnamed\right\rrbracket=\bra{0},  \]

 \[
\left\llbracket %
\beginpgfgraphicnamed{TikZit/blackcotriangledit}
\begin{tikzpicture}
	\begin{pgfonlayer}{nodelayer}
		\node [style=none] (0) at (0, -0.5) {};
		\node [style=none] (1) at (0.25, 0.5) {};
		\node [style=none] (2) at (-0.25, 0.5) {};
		\node [style=bspider] (3) at (0, 0) {};
	\end{pgfonlayer}
	\begin{pgfonlayer}{edgelayer}
		\draw (3) to (0.center);
		\draw (3) to (2.center);
		\draw (3) to (1.center);
	\end{pgfonlayer}
\end{tikzpicture}}
\endpgfgraphicnamed \right\rrbracket=\ket{0}\bra{00}+\sum_{i=1}^{d-1}\ket{i}(\bra{0i}+\bra{i0}), \quad \left\llbracket%
\beginpgfgraphicnamed{TikZit/blackdot}
\begin{tikzpicture}
	\begin{pgfonlayer}{nodelayer}
		\node [style=none] (0) at (0, -0.25) {};
		\node [style=bspider] (1) at (0, 0.25) {};
	\end{pgfonlayer}
	\begin{pgfonlayer}{edgelayer}
		\draw (1) to (0.center);
	\end{pgfonlayer}
\end{tikzpicture}}
\endpgfgraphicnamed\right\rrbracket=\ket{0},  \quad \left\llbracket%
\beginpgfgraphicnamed{TikZit/emptysquare}
\InputIfFileExists{TikZit/emptysquare.tikz}{}{\input{./figures/TikZit/emptysquare.tikz}}
\endpgfgraphicnamed\right\rrbracket=1,\]

\[
\left\llbracket%
\beginpgfgraphicnamed{TikZit/braid}
\InputIfFileExists{TikZit/braid.tikz}{}{\input{./figures/TikZit/braid.tikz}}
\endpgfgraphicnamed\right\rrbracket=\sum_{ j,k=0}^{d-1}\xi^{jk}\ket{jk}\bra{kj}, \xi=e^{i\frac{2\pi}{d}}, \quad
\left\llbracket%
\beginpgfgraphicnamed{TikZit/braidd2}
\InputIfFileExists{TikZit/braidd2.tikz}{}{\input{./figures/TikZit/braidd2.tikz}}
\endpgfgraphicnamed\right\rrbracket=\sum_{ j,k=0}^{d-1}\bar{\xi}^{jk}\ket{jk}\bra{kj}, \bar{\xi}=e^{-i\frac{2\pi}{d}},  \]

\[
   \left\llbracket%
\beginpgfgraphicnamed{TikZit/Id}
\begin{tikzpicture}
	\begin{pgfonlayer}{nodelayer}
		\node [style=none] (1) at (0.5, 0.3) {};
		\node [style=none] (2) at (0.5, -0.3) {};
		\node [style=none] (3) at (0.5, -0.5) {};
		\node [style=none] (4) at (0.5, 0.5) {};
	\end{pgfonlayer}
	\begin{pgfonlayer}{edgelayer}
		\draw (1.center) to (2.center);
	\end{pgfonlayer}
\end{tikzpicture}}
\endpgfgraphicnamed\right\rrbracket=I_d=\sum_{j=0}^{d-1}\ket{j}\bra{j},  \quad
 \left\llbracket%
\beginpgfgraphicnamed{TikZit/swap}
\InputIfFileExists{TikZit/swap.tikz}{}{\input{./figures/TikZit/swap.tikz}}
\endpgfgraphicnamed\right\rrbracket=\sum_{i, j=0}^{d-1}\ket{ji}\bra{ij},\quad
  \left\llbracket%
\beginpgfgraphicnamed{TikZit/cap}
\begin{tikzpicture}
	\begin{pgfonlayer}{nodelayer}
		\node [style=none] (0) at (-0.5, -0.25) {};
		\node [style=none] (1) at (0.5, -0.25) {};
	\end{pgfonlayer}
	\begin{pgfonlayer}{edgelayer}
		\draw [bend left=90, looseness=1.50] (0.center) to (1.center);
	\end{pgfonlayer}
\end{tikzpicture}}
\endpgfgraphicnamed\right\rrbracket=\sum_{j=0}^{d-1}\ket{jj}, \quad
   \left\llbracket%
\beginpgfgraphicnamed{TikZit/cup}
\begin{tikzpicture}
	\begin{pgfonlayer}{nodelayer}
		\node [style=none] (0) at (-0.5, 0.25) {};
		\node [style=none] (1) at (0.5, 0.25) {};
	\end{pgfonlayer}
	\begin{pgfonlayer}{edgelayer}
		\draw [bend right=90, looseness=1.50] (0.center) to (1.center);
	\end{pgfonlayer}
\end{tikzpicture}}
\endpgfgraphicnamed\right\rrbracket=\sum_{j=0}^{d-1}\bra{jj},   
      \]

\[  \llbracket D_1\otimes D_2  \rrbracket =  \llbracket D_1  \rrbracket \otimes  \llbracket  D_2  \rrbracket, \quad 
 \llbracket D_1\circ D_2  \rrbracket =  \llbracket D_1  \rrbracket \circ  \llbracket  D_2  \rrbracket.
  \]
  
  \begin{remark}
Except for the white spider and the black node $W$, all the other generators including the two braidings $\tau$ and $\tau^{-1}$ are the same as those for the  anyonic qudit  ZW-calculus [1].  When $d=2$, the black node %
\beginpgfgraphicnamed{TikZit/blacktriangledit}
}
\endpgfgraphicnamed is just the same as the original W node %
\beginpgfgraphicnamed{TikZit/oldblacknoded2}
\InputIfFileExists{TikZit/oldblacknoded2.tikz}{}{\input{./figures/TikZit/oldblacknoded2.tikz}}
\endpgfgraphicnamed as given in \cite{amar1} and \cite{amar}.
 \end{remark}

Now we give the rules of non-anyonic  qudit ZW-calculus.
   \begin{figure}[!h]
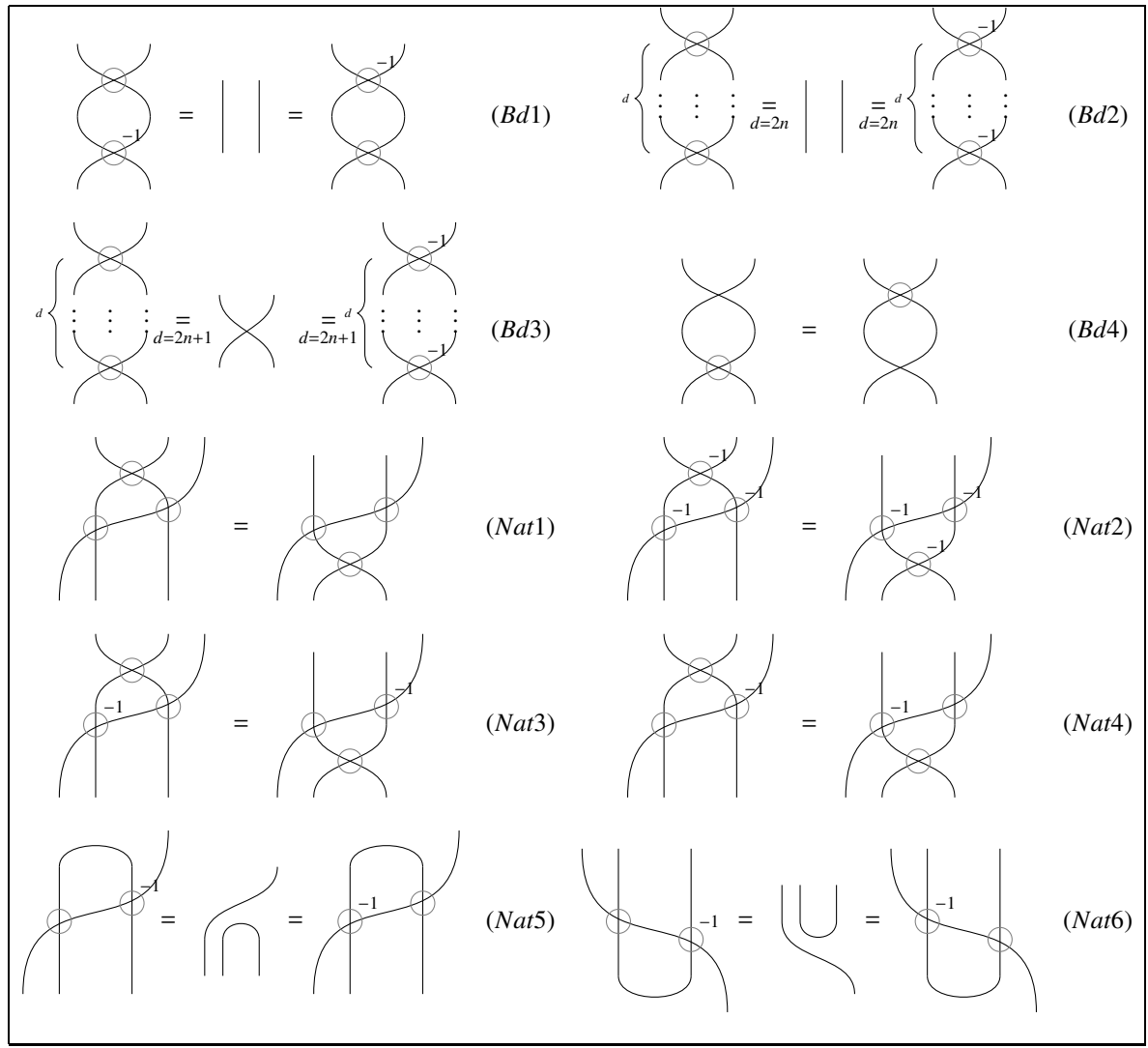

\begin{center} 
\[
\quad \qquad\begin{array}{|cccc|}
\hline
\beginpgfgraphicnamed{TikZit/braidscompose2}
\InputIfFileExists{TikZit/braidscompose2.tikz}{}{\input{./figures/TikZit/braidscompose2.tikz}}
\endpgfgraphicnamed&(Bd1) &%
\beginpgfgraphicnamed{TikZit/dbraidscompose}
\InputIfFileExists{TikZit/dbraidscompose.tikz}{}{\input{./figures/TikZit/dbraidscompose.tikz}}
\endpgfgraphicnamed &(Bd2)\\
  		  		    & &&\\ 
\beginpgfgraphicnamed{TikZit/dbraidscompose3}
\InputIfFileExists{TikZit/dbraidscompose3.tikz}{}{\input{./figures/TikZit/dbraidscompose3.tikz}}
\endpgfgraphicnamed&(Bd3) &%
\beginpgfgraphicnamed{TikZit/dbraidscompose4}
\InputIfFileExists{TikZit/dbraidscompose4.tikz}{}{\input{./figures/TikZit/dbraidscompose4.tikz}}
\endpgfgraphicnamed &(Bd4)\\
 & &&\\ 
\beginpgfgraphicnamed{TikZit/braidnat1}
\InputIfFileExists{TikZit/braidnat1.tikz}{}{\input{./figures/TikZit/braidnat1.tikz}}
\endpgfgraphicnamed&(Nat1) & %
\beginpgfgraphicnamed{TikZit/braidnat12}
\InputIfFileExists{TikZit/braidnat12.tikz}{}{\input{./figures/TikZit/braidnat12.tikz}}
\endpgfgraphicnamed&(Nat2)\\ 
  & &&\\ 
\beginpgfgraphicnamed{TikZit/braidnat2}
\InputIfFileExists{TikZit/braidnat2.tikz}{}{\input{./figures/TikZit/braidnat2.tikz}}
\endpgfgraphicnamed&(Nat3) &%
\beginpgfgraphicnamed{TikZit/braidnat3}
\InputIfFileExists{TikZit/braidnat3.tikz}{}{\input{./figures/TikZit/braidnat3.tikz}}
\endpgfgraphicnamed &(Nat4)\\
 & &&\\ 
\beginpgfgraphicnamed{TikZit/capbraidnat2}
\InputIfFileExists{TikZit/capbraidnat2.tikz}{}{\input{./figures/TikZit/capbraidnat2.tikz}}
\endpgfgraphicnamed&(Nat5) &%
\beginpgfgraphicnamed{TikZit/cupbraidnat}
\InputIfFileExists{TikZit/cupbraidnat.tikz}{}{\input{./figures/TikZit/cupbraidnat.tikz}}
\endpgfgraphicnamed &(Nat6)\\
 & &&\\ 
  		  		\hline  

  		\end{array}\]      
  	\end{center}
  		\caption{Qudit ZW-calculus rules I, where  $n$ is a positive integer.
}\label{quditrules1}
  \end{figure}
  \FloatBarrier
  
    \begin{figure}[!h]
\begin{center} 
\[
\quad \qquad\begin{array}{|cccc|}
\hline
\beginpgfgraphicnamed{TikZit/wsymmetry}
\InputIfFileExists{TikZit/wsymmetry.tikz}{}{\input{./figures/TikZit/wsymmetry.tikz}}
\endpgfgraphicnamed&(Syb) &%
\beginpgfgraphicnamed{TikZit/wunit}
\InputIfFileExists{TikZit/wunit.tikz}{}{\input{./figures/TikZit/wunit.tikz}}
\endpgfgraphicnamed &(Unt)\\
  & &&\\ 
\beginpgfgraphicnamed{TikZit/wassociativity}
\InputIfFileExists{TikZit/wassociativity.tikz}{}{\input{./figures/TikZit/wassociativity.tikz}}
\endpgfgraphicnamed&(Aso) &%
\beginpgfgraphicnamed{TikZit/blackcopy}
\InputIfFileExists{TikZit/blackcopy.tikz}{}{\input{./figures/TikZit/blackcopy.tikz}}
\endpgfgraphicnamed &(Cpy)\\
    & &&\\ 
\beginpgfgraphicnamed{TikZit/wnaturality}
\InputIfFileExists{TikZit/wnaturality.tikz}{}{\input{./figures/TikZit/wnaturality.tikz}}
\endpgfgraphicnamed&(Wnt1) &%
\beginpgfgraphicnamed{TikZit/wnaturality2}
\InputIfFileExists{TikZit/wnaturality2.tikz}{}{\input{./figures/TikZit/wnaturality2.tikz}}
\endpgfgraphicnamed &(Wnt2)\\
  & &&\\ 
\beginpgfgraphicnamed{TikZit/wbraidsymmetry}
\InputIfFileExists{TikZit/wbraidsymmetry.tikz}{}{\input{./figures/TikZit/wbraidsymmetry.tikz}}
\endpgfgraphicnamed&(Wsm) &  %
\beginpgfgraphicnamed{TikZit/twobraidsymmetry}
\InputIfFileExists{TikZit/twobraidsymmetry.tikz}{}{\input{./figures/TikZit/twobraidsymmetry.tikz}}
\endpgfgraphicnamed&(Bsm) \\
   & &&\\ 
\beginpgfgraphicnamed{TikZit/wdotsempty}
\InputIfFileExists{TikZit/wdotsempty.tikz}{}{\input{./figures/TikZit/wdotsempty.tikz}}
\endpgfgraphicnamed&(Ept) &%
\beginpgfgraphicnamed{TikZit/blackhopfdit}
\InputIfFileExists{TikZit/blackhopfdit.tikz}{}{\input{./figures/TikZit/blackhopfdit.tikz}}
\endpgfgraphicnamed &(Bhf)\\
  		  		\hline  

  		\end{array}\]      
  	\end{center}
  		\caption{Qudit ZW-calculus rules II, where   $\protect\overrightarrow{-1}=\protect\overbrace{(-1,\cdots,-1)}^{d-1}$.
}\label{quditrules2}
  \end{figure}
  \FloatBarrier
  
     \begin{figure}[!h]
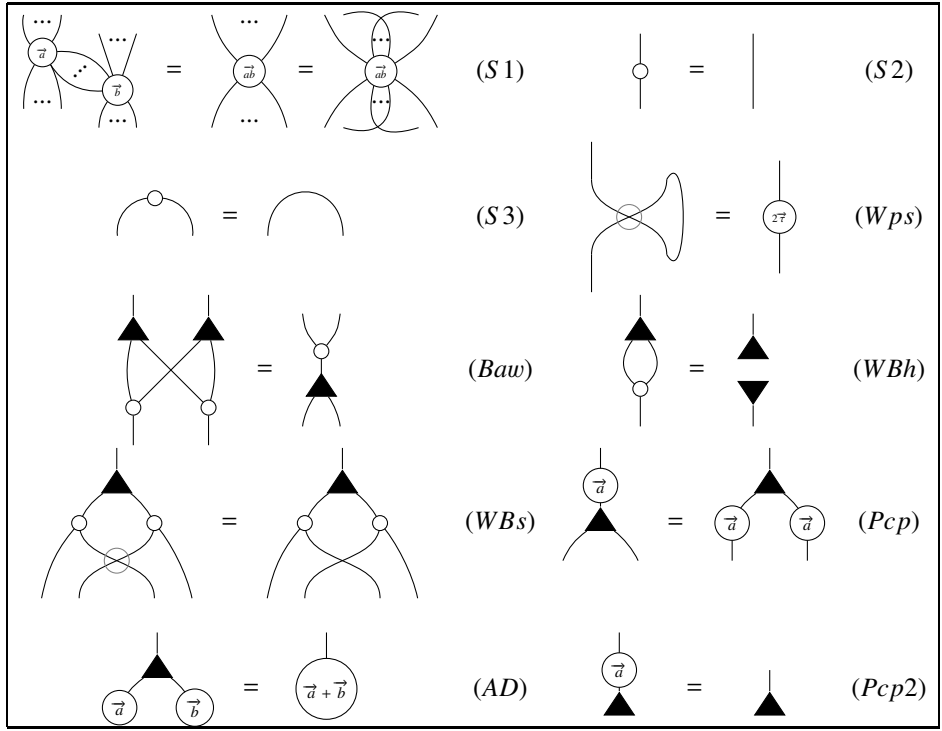

\begin{center} 
\[
\quad \qquad\begin{array}{|cccc|}
\hline
\beginpgfgraphicnamed{TikZit/whitespiderfusiondit}
\InputIfFileExists{TikZit/whitespiderfusiondit.tikz}{}{\input{./figures/TikZit/whitespiderfusiondit.tikz}}
\endpgfgraphicnamed&(S1) &%
\beginpgfgraphicnamed{TikZit/whitespiders2}
\InputIfFileExists{TikZit/whitespiders2.tikz}{}{\input{./figures/TikZit/whitespiders2.tikz}}
\endpgfgraphicnamed &(S2)\\
\beginpgfgraphicnamed{TikZit/whitespiders3}
\InputIfFileExists{TikZit/whitespiders3.tikz}{}{\input{./figures/TikZit/whitespiders3.tikz}}
\endpgfgraphicnamed&(S3) &%
\beginpgfgraphicnamed{TikZit/braidloopwhitephase}
\InputIfFileExists{TikZit/braidloopwhitephase.tikz}{}{\input{./figures/TikZit/braidloopwhitephase.tikz}}
\endpgfgraphicnamed &(Wps)\\
\beginpgfgraphicnamed{TikZit/whiteblackbiagebra}
\InputIfFileExists{TikZit/whiteblackbiagebra.tikz}{}{\input{./figures/TikZit/whiteblackbiagebra.tikz}}
\endpgfgraphicnamed&(Baw) &%
\beginpgfgraphicnamed{TikZit/whiteblackhopf}
\InputIfFileExists{TikZit/whiteblackhopf.tikz}{}{\input{./figures/TikZit/whiteblackhopf.tikz}}
\endpgfgraphicnamed &(WBh)\\
%
\beginpgfgraphicnamed{TikZit/whiteblackswap}
\InputIfFileExists{TikZit/whiteblackswap.tikz}{}{\input{./figures/TikZit/whiteblackswap.tikz}}
\endpgfgraphicnamed&(WBs) &%
\beginpgfgraphicnamed{TikZit/blacknodecopy}
\InputIfFileExists{TikZit/blacknodecopy.tikz}{}{\input{./figures/TikZit/blacknodecopy.tikz}}
\endpgfgraphicnamed &(Pcp)\\
 & &&\\ 
\beginpgfgraphicnamed{TikZit/blackadddit}
\InputIfFileExists{TikZit/blackadddit.tikz}{}{\input{./figures/TikZit/blackadddit.tikz}}
\endpgfgraphicnamed&(AD) &%
\beginpgfgraphicnamed{TikZit/blackcdotphasecopy}
\InputIfFileExists{TikZit/blackcdotphasecopy.tikz}{}{\input{./figures/TikZit/blackcdotphasecopy.tikz}}
\endpgfgraphicnamed &(Pcp2)\\
 \hline  

  		\end{array}\]      
  	\end{center}
  		\caption{Qudit ZW-calculus rules III, where $\protect\overrightarrow{a}=(a_1,\cdots, a_{d-1}), \protect\overrightarrow{b}=(b_1,\cdots, b_{d-1}), \protect\overrightarrow{ab}=(a_1b_1,\cdots, a_{d-1}b_{d-1}), \protect\overrightarrow{a}+\protect\overrightarrow{b}=(a_1+b_1,\cdots, a_{d-1}+b_{d-1}), a_k, b_k\in \mathbb C, k \in \{1,\cdots, d-1\},   \protect\overrightarrow{1}=\protect\overbrace{(1,\cdots,1)}^{d-1}, 2\protect\overrightarrow{\tau}=(\xi,\cdots, \xi^{k^2}, \cdots,  \xi^{(d-1)^2}),  \xi=e^{i\frac{2\pi}{d}}$.
}\label{quditrules3}
  \end{figure}
  \FloatBarrier
  
\begin{remark}  
 These rules are obtained by comparing with the qubit ZW-calculus rules as presented in \cite{amar} and \cite{amarngwanglics}.  
 \end{remark}

\begin{remark}
Note that in general ($d> 2$) there is no bialgebra rule between two the black nodes, i.e.,
\[ %
\beginpgfgraphicnamed{TikZit/nowbialebra2}
\InputIfFileExists{TikZit/nowbialebra2.tikz}{}{\input{./figures/TikZit/nowbialebra2.tikz}}
\endpgfgraphicnamed\]
 \end{remark} 
 
Based on the associative rule (Aso),  we can define the black spiders and give their interpretation as follows:
\[ %
\beginpgfgraphicnamed{TikZit/mlegsblackspider}
\InputIfFileExists{TikZit/mlegsblackspider.tikz}{}{\input{./figures/TikZit/mlegsblackspider.tikz}}
\endpgfgraphicnamed\quad    %
\beginpgfgraphicnamed{TikZit/mlegsdownblackspider}
\InputIfFileExists{TikZit/mlegsdownblackspider.tikz}{}{\input{./figures/TikZit/mlegsdownblackspider.tikz}}
\endpgfgraphicnamed \]
\[ \left\llbracket %
\beginpgfgraphicnamed{TikZit/mlegsblackspiderone}
\InputIfFileExists{TikZit/mlegsblackspiderone.tikz}{}{\input{./figures/TikZit/mlegsblackspiderone.tikz}}
\endpgfgraphicnamed \right\rrbracket=\underbrace{\ket{0\cdots0}}_{m}\bra{0}+\sum_{i=1}^{d-1}\sum_{k=1}^{m}\overbrace{\ket{\underbrace{0\cdots 0}_{k-1} i 0\cdots 0}}^{m}\bra{i}, 
\left\llbracket %
\beginpgfgraphicnamed{TikZit/mlegsdownblackspiderone}
\InputIfFileExists{TikZit/mlegsdownblackspiderone.tikz}{}{\input{./figures/TikZit/mlegsdownblackspiderone.tikz}}
\endpgfgraphicnamed \right\rrbracket=\ket{0}\underbrace{\bra{0\cdots0}}_{m}+\sum_{i=1}^{d-1}\sum_{k=1}^{m}\ket{i}\overbrace{\bra{\underbrace{0\cdots 0}_{k-1} i 0\cdots 0}}^{m}.
\]

Below we show some properties of the rules by deriving  a few equalities from them.

\begin{lemma}\label{crossinvbycorsslm}
\[
\beginpgfgraphicnamed{TikZit/crossinvbycorss}
\InputIfFileExists{TikZit/crossinvbycorss.tikz}{}{\input{./figures/TikZit/crossinvbycorss.tikz}}
\endpgfgraphicnamed 
\]
\end{lemma}
This can be directly obtained from the rules (Bd2) and (Bd3).

\begin{lemma}\label{dbraidscompose4invlm}
\[
\beginpgfgraphicnamed{TikZit/dbraidscompose4inv}
\InputIfFileExists{TikZit/dbraidscompose4inv.tikz}{}{\input{./figures/TikZit/dbraidscompose4inv.tikz}}
\endpgfgraphicnamed 
\]
\end{lemma}
This can be directly obtained from the rule (Bd4) and lemma \ref{crossinvbycorsslm}.

\begin{lemma}\label{braidptracelm}
\[
\beginpgfgraphicnamed{TikZit/braidptrace}
\InputIfFileExists{TikZit/braidptrace.tikz}{}{\input{./figures/TikZit/braidptrace.tikz}}
\endpgfgraphicnamed
\]
\end{lemma}
\begin{proof}
\[
\beginpgfgraphicnamed{TikZit/braidptraceprf}
\InputIfFileExists{TikZit/braidptraceprf.tikz}{}{\input{./figures/TikZit/braidptraceprf.tikz}}
\endpgfgraphicnamed
\]
\end{proof}

Similarly, we have  
\begin{lemma}
\[
\beginpgfgraphicnamed{TikZit/braidptrace2}
\InputIfFileExists{TikZit/braidptrace2.tikz}{}{\input{./figures/TikZit/braidptrace2.tikz}}
\endpgfgraphicnamed
\]
\end{lemma}

This can be  proved in the same way as that for lemma \ref{braidptracelm}, via lemma  \ref{dbraidscompose4invlm}.

\begin{lemma}
\[
\beginpgfgraphicnamed{TikZit/wivbraidsymmetry}
\InputIfFileExists{TikZit/wivbraidsymmetry.tikz}{}{\input{./figures/TikZit/wivbraidsymmetry.tikz}}
\endpgfgraphicnamed 
\]
\end{lemma}
\begin{proof}
If $d=2n$, then 
\[
\beginpgfgraphicnamed{TikZit/wivbraidsymmetryprf}
\InputIfFileExists{TikZit/wivbraidsymmetryprf.tikz}{}{\input{./figures/TikZit/wivbraidsymmetryprf.tikz}}
\endpgfgraphicnamed 
\]
If $d=2n+1$, then 
\[
\beginpgfgraphicnamed{TikZit/wivbraidsymmetryprf2}
\InputIfFileExists{TikZit/wivbraidsymmetryprf2.tikz}{}{\input{./figures/TikZit/wivbraidsymmetryprf2.tikz}}
\endpgfgraphicnamed 
\]
\end{proof}

\begin{lemma}
\[
\beginpgfgraphicnamed{TikZit/wnaturalityflip}
\InputIfFileExists{TikZit/wnaturalityflip.tikz}{}{\input{./figures/TikZit/wnaturalityflip.tikz}}
\endpgfgraphicnamed 
\]
\end{lemma}
\begin{proof}
\[
\beginpgfgraphicnamed{TikZit/wnaturalityflipprf}
\InputIfFileExists{TikZit/wnaturalityflipprf.tikz}{}{\input{./figures/TikZit/wnaturalityflipprf.tikz}}
\endpgfgraphicnamed 
\]
\end{proof}

\begin{lemma}\label{whitedownblackbiagebralm}
\[
\beginpgfgraphicnamed{TikZit/whitedownblackbiagebra}
\InputIfFileExists{TikZit/whitedownblackbiagebra.tikz}{}{\input{./figures/TikZit/whitedownblackbiagebra.tikz}}
\endpgfgraphicnamed 
\]
\end{lemma}
\begin{proof}
\[
\beginpgfgraphicnamed{TikZit/whitedownblackbiagebraprf}
\InputIfFileExists{TikZit/whitedownblackbiagebraprf.tikz}{}{\input{./figures/TikZit/whitedownblackbiagebraprf.tikz}}
\endpgfgraphicnamed 
\]
\end{proof}

\begin{lemma}\label{whitedownblackhopflm}
\[
\beginpgfgraphicnamed{TikZit/whitedownblackhopf}
\InputIfFileExists{TikZit/whitedownblackhopf.tikz}{}{\input{./figures/TikZit/whitedownblackhopf.tikz}}
\endpgfgraphicnamed 
\]
\end{lemma}
\begin{proof}
\[
\beginpgfgraphicnamed{TikZit/whitedownblackhopfprf}
\InputIfFileExists{TikZit/whitedownblackhopfprf.tikz}{}{\input{./figures/TikZit/whitedownblackhopfprf.tikz}}
\endpgfgraphicnamed 
\]
\end{proof}

\begin{lemma}
\[
\beginpgfgraphicnamed{TikZit/whiteblackinvswap}
\InputIfFileExists{TikZit/whiteblackinvswap.tikz}{}{\input{./figures/TikZit/whiteblackinvswap.tikz}}
\endpgfgraphicnamed 
\]
\end{lemma}

This follows directly from the rules (Bd4), (WBs) and lemma \ref{crossinvbycorsslm}.

\begin{lemma}
\[
\beginpgfgraphicnamed{TikZit/braidloopphase}
\InputIfFileExists{TikZit/braidloopphase.tikz}{}{\input{./figures/TikZit/braidloopphase.tikz}}
\endpgfgraphicnamed 
\]
\end{lemma}
This follows directly from the rules (S1), (Wps).

\begin{lemma}
\[
\beginpgfgraphicnamed{TikZit/addition2blacks}
\InputIfFileExists{TikZit/addition2blacks.tikz}{}{\input{./figures/TikZit/addition2blacks.tikz}}
\endpgfgraphicnamed 
\]
\end{lemma}
\begin{proof}
\[
\beginpgfgraphicnamed{TikZit/addition2blacksprf}
\InputIfFileExists{TikZit/addition2blacksprf.tikz}{}{\input{./figures/TikZit/addition2blacksprf.tikz}}
\endpgfgraphicnamed 
\]
\end{proof}

\begin{lemma}
\[
\beginpgfgraphicnamed{TikZit/blackdotcopy}
\InputIfFileExists{TikZit/blackdotcopy.tikz}{}{\input{./figures/TikZit/blackdotcopy.tikz}}
\endpgfgraphicnamed 
\]
\end{lemma}
\begin{proof}
\[
\beginpgfgraphicnamed{TikZit/blackdotcopyprf}
\InputIfFileExists{TikZit/blackdotcopyprf.tikz}{}{\input{./figures/TikZit/blackdotcopyprf.tikz}}
\endpgfgraphicnamed 
\]
\end{proof}

\begin{lemma}
\[
\beginpgfgraphicnamed{TikZit/whiteblackgeneralhopf}
\InputIfFileExists{TikZit/whiteblackgeneralhopf.tikz}{}{\input{./figures/TikZit/whiteblackgeneralhopf.tikz}}
\endpgfgraphicnamed   \quad m \geq 2.
\]
\end{lemma}

It follows directly from the rule (WBh).
\begin{remark}

 
 
Note that in the rule (WBh), the white node and the black node just need two wires to get themselves in disconnected,  while in qudit ZX-calculus, the green (Z) spider and the red (X) spider need $d$ wires to get disconnected. 
 \end{remark}



 

  
\section{Translations between qudit ZX-calculus and non-anyonic qudit ZW-calculus}
Algebraic ZX-calculus \cite{wangalg2020,qwangnormalformbit}  which includes the normal  qubit ZX-calculus  \cite{CoeckeDuncan} has been generalised to any higher dimension  \cite{wangqufinite}.

  First we give the generators of qudit ZX-calculus. 
  \FloatBarrier
 \begin{table}
\begin{center} 
\begin{tabular}{|r@{~}r@{~}c@{~}c|r@{~}r@{~}c@{~}c|}
\hline
$R_{Z,\overrightarrow{a}}^{(n,m)}$&$:$&$n\to m$ & %
\beginpgfgraphicnamed{TikZit/generalgreenspiderqdit2}
\InputIfFileExists{TikZit/generalgreenspiderqdit2.tikz}{}{\input{./figures/TikZit/generalgreenspiderqdit2.tikz}}
\endpgfgraphicnamed  & &&& \\ \hline
$H$&$:$&$1\to 1$ &%
\beginpgfgraphicnamed{TikZit/HadaDecomSingleslt}
\InputIfFileExists{TikZit/HadaDecomSingleslt.tikz}{}{\input{./figures/TikZit/HadaDecomSingleslt.tikz}}
\endpgfgraphicnamed
 &  $H^{\dagger}$&$:$&$ 1\to 1$& %
\beginpgfgraphicnamed{TikZit/RGg_Hadad}
\InputIfFileExists{TikZit/RGg_Hadad.tikz}{}{\input{./figures/TikZit/RGg_Hadad.tikz}}
\endpgfgraphicnamed\\\hline
   $\mathbb I$&$:$&$1\to 1$&%
\beginpgfgraphicnamed{TikZit/Id}
\InputIfFileExists{TikZit/Id.tikz}{}{\input{./figures/TikZit/Id.tikz}}
\endpgfgraphicnamed &   $\sigma$&$:$&$ 2\to 2$& %
\beginpgfgraphicnamed{TikZit/swap}
\InputIfFileExists{TikZit/swap.tikz}{}{\input{./figures/TikZit/swap.tikz}}
\endpgfgraphicnamed\\ \hline
   $C_a$&$:$&$ 0\to 2$& %
\beginpgfgraphicnamed{TikZit/cap}
\InputIfFileExists{TikZit/cap.tikz}{}{\input{./figures/TikZit/cap.tikz}}
\endpgfgraphicnamed &$ C_u$&$:$&$ 2\to 0$&%
\beginpgfgraphicnamed{TikZit/cup}
\InputIfFileExists{TikZit/cup.tikz}{}{\input{./figures/TikZit/cup.tikz}}
\endpgfgraphicnamed \\\hline
 $T_g$&$:$&$1\to 1$&%
\beginpgfgraphicnamed{TikZit/triangle}
\InputIfFileExists{TikZit/triangle.tikz}{}{\input{./figures/TikZit/triangle.tikz}}
\endpgfgraphicnamed  & $T_g^{-1}$&$:$&$1\to 1$&%
\beginpgfgraphicnamed{TikZit/triangleinv}
\InputIfFileExists{TikZit/triangleinv.tikz}{}{\input{./figures/TikZit/triangleinv.tikz}}
\endpgfgraphicnamed \\\hline
\end{tabular}\caption{ Generators of qudit ZX-calculus, where $m,n\in \mathbb N$, $\protect\overrightarrow{a}=(a_1, \cdots, a_{d-1}), a_i \in \mathbb C. $} \label{qbzxgeneratordit}
\end{center}
\end{table}
\FloatBarrier

 For convenience, we make the following denotations: 
\[
 %
\beginpgfgraphicnamed{TikZit/spidersdenotedit3}
\InputIfFileExists{TikZit/spidersdenotedit3.tikz}{}{\input{./figures/TikZit/spidersdenotedit3.tikz}}
\endpgfgraphicnamed 
\]
 where $   \overrightarrow{\alpha}=(\alpha_1, \cdots, \alpha_{d-1}),  \alpha_i \in \mathbb R,  e^{i\overrightarrow{\alpha}}=(e^{i\alpha_1}, \cdots, e^{i\alpha_{d-1}}), \overrightarrow{1}=\overbrace{(1,\cdots,1)}^{d-1}, \overrightarrow{s}=(0,\cdots,0, \frac{1}{d}-1), \overrightarrow{\tau}=(\tau_1, \cdots, \tau_k, \cdots, \tau_{d-1}), \tau_k=k\pi+\frac{k^2\pi}{d}, 1\leq k \leq d-1,  K_j=(j\frac{2\pi}{d}, 2j\frac{2\pi}{d}, \cdots, (d-1)j\frac{2\pi}{d}),  0\leq j \leq d-1$.
 
 The diagrams of the qudit ZX-calculus have the following standard interpretation $ \left\llbracket \cdot \right\rrbracket$:

 \[
 \left\llbracket %
\beginpgfgraphicnamed{TikZit/generalgreenspiderqdit2}
\InputIfFileExists{TikZit/generalgreenspiderqdit2.tikz}{}{\input{./figures/TikZit/generalgreenspiderqdit2.tikz}}
\endpgfgraphicnamed \right\rrbracket=\sum_{j=0}^{d-1}a_j\ket{j}^{\otimes m}\bra{j}^{\otimes n}, a_0=1,  \protect\overrightarrow{a}=(a_1, \cdots, a_{d-1}), a_i \in \mathbb C. 
\]

 \[
\left\llbracket %
\beginpgfgraphicnamed{TikZit/quditrspiderclassic}
\InputIfFileExists{TikZit/quditrspiderclassic.tikz}{}{\input{./figures/TikZit/quditrspiderclassic.tikz}}
\endpgfgraphicnamed \right\rrbracket=\sum_{\substack{0\leq i_1, \cdots, i_m,  j_1, \cdots, j_n\leq d-1\\ i_1+\cdots+ i_m+j\equiv  j_1+\cdots +j_n(mod~ d)}}\ket{i_1, \cdots, i_m}\bra{j_1, \cdots, j_n}, \]
\[ K_j=(j\frac{2\pi}{d}, 2j\frac{2\pi}{d}, \cdots, (d-1)j\frac{2\pi}{d}),  0\leq j \leq d-1,
\]

\[
\left\llbracket%
\beginpgfgraphicnamed{TikZit/HadaDecomSingleslt}
\begin{tikzpicture}
	\begin{pgfonlayer}{nodelayer}
		\node [style=H box] (0) at (-0.75, 0) {$H$};
		\node [style=none] (1) at (-0.75, -0.5) {};
		\node [style=none] (2) at (-0.75, 0.5) {};
	\end{pgfonlayer}
	\begin{pgfonlayer}{edgelayer}
		\draw (2.center) to (0);
		\draw (1.center) to (0);
	\end{pgfonlayer}
\end{tikzpicture}}
\endpgfgraphicnamed\right\rrbracket=\sum_{k, j=0}^{d-1}\xi^{jk}\ket{j}\bra{k},  \xi=e^{i\frac{2\pi}{d}}, \quad
\left\llbracket%
\beginpgfgraphicnamed{TikZit/RGg_Hadad}
\begin{tikzpicture}
	\begin{pgfonlayer}{nodelayer}
		\node [style={H box}] (0) at (0, -0) {$H^\dagger$};
		\node [style=none] (1) at (0, -0.5) {};
		\node [style=none] (2) at (0, 0.5) {};
	\end{pgfonlayer}
	\begin{pgfonlayer}{edgelayer}
		\draw (2.center) to (0);
		\draw (1.center) to (0);
	\end{pgfonlayer}
\end{tikzpicture}}
\endpgfgraphicnamed\right\rrbracket=\sum_{k, j=0}^{d-1}\bar{\xi}^{jk}\ket{j}\bra{k}, \bar{\xi}=e^{-i\frac{2\pi}{d}},  \]
\[
  \left\llbracket%
\beginpgfgraphicnamed{TikZit/triangle}
\begin{tikzpicture}
	\begin{pgfonlayer}{nodelayer}
		\node [style=none] (0) at (0, 0.5) {};
		\node [style=triangle] (1) at (0, 0) {};
		\node [style=none] (2) at (0, -0.5) {};
	\end{pgfonlayer}
	\begin{pgfonlayer}{edgelayer}
		\draw (0.center) to (2.center);
	\end{pgfonlayer}
\end{tikzpicture}}
\endpgfgraphicnamed\right\rrbracket=I_d+\sum_{i=1}^{d-1}\ket{0}\bra{i}, \quad
   \left\llbracket%
\beginpgfgraphicnamed{TikZit/triangleinv}
\begin{tikzpicture}
	\begin{pgfonlayer}{nodelayer}
		\node [style=none] (0) at (0.25, 0.25) {-{\scriptsize1}};
		\node [style=triangle] (1) at (0, 0) {};
		\node [style=none] (2) at (0, -0.5) {};
		\node [style=none] (3) at (0, 0.5) {};
	\end{pgfonlayer}
	\begin{pgfonlayer}{edgelayer}
		\draw (3.center) to (2.center);
	\end{pgfonlayer}
\end{tikzpicture}}
\endpgfgraphicnamed\right\rrbracket=I_d-\sum_{i=1}^{d-1}\ket{0}\bra{i},
   \]
   \[
      \left\llbracket%
\beginpgfgraphicnamed{TikZit/redtaugate}
\begin{tikzpicture}
	\begin{pgfonlayer}{nodelayer}
		\node [style=none] (0) at (0, -0.5) {};
		\node [style=none] (1) at (0, 0.5) {};
		\node [style=rn] (2) at (0, 0) { ${\scriptstyle  \overrightarrow{\tau}}$ };
	\end{pgfonlayer}
	\begin{pgfonlayer}{edgelayer}
		\draw (2) to (1.center);
		\draw (2) to (0.center);
	\end{pgfonlayer}
\end{tikzpicture}}
\endpgfgraphicnamed\right\rrbracket=\frac{1}{d}\sum_{k,l,n=0}^{d-1}e^{i\tau_l}\xi^{(k-n)l}\ket{k}\bra{n}, \overrightarrow{\tau}=(\tau_1, \cdots, \tau_k, \cdots, \tau_{d-1}), \tau_k=k\pi+\frac{k^2\pi}{d}, 0\leq k \leq d-1,
   \]

\[
   \left\llbracket%
\beginpgfgraphicnamed{TikZit/Id}
}
\endpgfgraphicnamed\right\rrbracket=I_d=\sum_{j=0}^{d-1}\ket{j}\bra{j},  \quad
 \left\llbracket%
\beginpgfgraphicnamed{TikZit/swap}
\InputIfFileExists{TikZit/swap.tikz}{}{\input{./figures/TikZit/swap.tikz}}
\endpgfgraphicnamed\right\rrbracket=\sum_{i, j=0}^{d-1}\ket{ji}\bra{ij},\quad
  \left\llbracket%
\beginpgfgraphicnamed{TikZit/cap}
}
\endpgfgraphicnamed\right\rrbracket=\sum_{j=0}^{d-1}\ket{jj}, \quad
   \left\llbracket%
\beginpgfgraphicnamed{TikZit/cup}
}
\endpgfgraphicnamed\right\rrbracket=\sum_{j=0}^{d-1}\bra{jj},   \quad \left\llbracket%
\beginpgfgraphicnamed{TikZit/emptysquare}
\InputIfFileExists{TikZit/emptysquare.tikz}{}{\input{./figures/TikZit/emptysquare.tikz}}
\endpgfgraphicnamed\right\rrbracket=1,
      \]

\[  \llbracket D_1\otimes D_2  \rrbracket =  \llbracket D_1  \rrbracket \otimes  \llbracket  D_2  \rrbracket, \quad 
 \llbracket D_1\circ D_2  \rrbracket =  \llbracket D_1  \rrbracket \circ  \llbracket  D_2  \rrbracket.
  \]

As similar to the qubit case \cite{amarngwanglics}, there is a bidirectional  translation between the non-anyonic qudit ZW-calculus and the qudit ZX-calculus \cite{wangqufinite}.  The translation from ZX to ZW is denoted as $ \left\llbracket \cdot \right\rrbracket_{XW}$. Then we have 

\[
  \left\llbracket%
\beginpgfgraphicnamed{TikZit/Id}
}
\endpgfgraphicnamed\right\rrbracket_{XW}=%
\beginpgfgraphicnamed{TikZit/Id}
}
\endpgfgraphicnamed,  \quad
 \left\llbracket%
\beginpgfgraphicnamed{TikZit/swap}
\InputIfFileExists{TikZit/swap.tikz}{}{\input{./figures/TikZit/swap.tikz}}
\endpgfgraphicnamed\right\rrbracket_{XW}=%
\beginpgfgraphicnamed{TikZit/swap}
\InputIfFileExists{TikZit/swap.tikz}{}{\input{./figures/TikZit/swap.tikz}}
\endpgfgraphicnamed,\quad
  \left\llbracket%
\beginpgfgraphicnamed{TikZit/cap}
}
\endpgfgraphicnamed\right\rrbracket_{XW}=%
\beginpgfgraphicnamed{TikZit/cap}
}
\endpgfgraphicnamed, \quad
   \left\llbracket%
\beginpgfgraphicnamed{TikZit/cup}
}
\endpgfgraphicnamed\right\rrbracket_{XW}=%
\beginpgfgraphicnamed{TikZit/cup}
}
\endpgfgraphicnamed,  
\]

\[
  \left\llbracket%
\beginpgfgraphicnamed{TikZit/generalgreenspiderqdit2}
\InputIfFileExists{TikZit/generalgreenspiderqdit2.tikz}{}{\input{./figures/TikZit/generalgreenspiderqdit2.tikz}}
\endpgfgraphicnamed\right\rrbracket_{XW}= %
\beginpgfgraphicnamed{TikZit/whitespiderdit}
\InputIfFileExists{TikZit/whitespiderdit.tikz}{}{\input{./figures/TikZit/whitespiderdit.tikz}}
\endpgfgraphicnamed, \quad \left\llbracket%
\beginpgfgraphicnamed{TikZit/HadaDecomSingleslt}
}
\endpgfgraphicnamed\right\rrbracket_{XW}= %
\beginpgfgraphicnamed{TikZit/htranslate}
\InputIfFileExists{TikZit/htranslate.tikz}{}{\input{./figures/TikZit/htranslate.tikz}}
\endpgfgraphicnamed, \quad
\left\llbracket%
\beginpgfgraphicnamed{TikZit/RGg_Hadad}
}
\endpgfgraphicnamed\right\rrbracket_{XW}= %
\beginpgfgraphicnamed{TikZit/hinversetranslate}
\InputIfFileExists{TikZit/hinversetranslate.tikz}{}{\input{./figures/TikZit/hinversetranslate.tikz}}
\endpgfgraphicnamed,  \quad  \left\llbracket%
\beginpgfgraphicnamed{TikZit/emptysquare}
\InputIfFileExists{TikZit/emptysquare.tikz}{}{\input{./figures/TikZit/emptysquare.tikz}}
\endpgfgraphicnamed\right\rrbracket_{XW}=%
\beginpgfgraphicnamed{TikZit/emptysquare}
\InputIfFileExists{TikZit/emptysquare.tikz}{}{\input{./figures/TikZit/emptysquare.tikz}}
\endpgfgraphicnamed,
\]

\[
  \left\llbracket%
\beginpgfgraphicnamed{TikZit/triangle}
}
\endpgfgraphicnamed\right\rrbracket_{XW}= %
\beginpgfgraphicnamed{TikZit/triangletranslate}
\InputIfFileExists{TikZit/triangletranslate.tikz}{}{\input{./figures/TikZit/triangletranslate.tikz}}
\endpgfgraphicnamed, \quad
\left\llbracket%
\beginpgfgraphicnamed{TikZit/triangleinv}
}
\endpgfgraphicnamed\right\rrbracket_{XW}= %
\beginpgfgraphicnamed{TikZit/invtriangletranslate}
\InputIfFileExists{TikZit/invtriangletranslate.tikz}{}{\input{./figures/TikZit/invtriangletranslate.tikz}}
\endpgfgraphicnamed, \quad  \llbracket D_1\otimes D_2  \rrbracket_{XW} =  \llbracket D_1  \rrbracket \otimes  \llbracket  D_2  \rrbracket, \quad 
 \llbracket D_1\circ D_2  \rrbracket_{XW} =  \llbracket D_1  \rrbracket \circ  \llbracket  D_2  \rrbracket,
\]
 where $\overrightarrow{-1}= \overbrace{(-1,\cdots,-1)}^{d-1}$.
 
 The translation $ \left\llbracket \cdot \right\rrbracket_{XW}$ preserves the standard interpretation:
 \begin{lemma}\label{xtowpreserve}
 For any diagram $D$ produced from the qudit ZX-calculus, we have $ \left\llbracket  \left\llbracket D \right\rrbracket_{XW}\right\rrbracket= \left\llbracket D \right\rrbracket$.
 \end{lemma}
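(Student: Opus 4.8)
The plan is to exploit the fact that $\llbracket\cdot\rrbracket_{XW}$ is defined only on the generators of the qudit ZX-calculus listed in Table~\ref{qbzxgeneratordit} and is then extended to all diagrams by $\llbracket D_1\otimes D_2\rrbracket_{XW}=\llbracket D_1\rrbracket_{XW}\otimes\llbracket D_2\rrbracket_{XW}$ and $\llbracket D_1\circ D_2\rrbracket_{XW}=\llbracket D_1\rrbracket_{XW}\circ\llbracket D_2\rrbracket_{XW}$, so that it is a strict monoidal functor on the free PROP over the ZX-generators. Since the standard interpretation $\llbracket\cdot\rrbracket$ of ZW-diagrams is likewise strictly monoidal, both $D\mapsto\llbracket\llbracket D\rrbracket_{XW}\rrbracket$ and $D\mapsto\llbracket D\rrbracket$ are strict monoidal functors into $\fhilb$, and two such functors out of a free PROP coincide on every diagram as soon as they coincide on the generating morphisms. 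I would therefore argue by induction on the way $D$ is built from generators via $\circ$ and $\otimes$, reducing the statement to the claim $\llbracket\llbracket g\rrbracket_{XW}\rrbracket=\llbracket g\rrbracket$ for each generator $g$.

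For the structural generators $\mathbb I$, $\sigma$, $C_a$, $C_u$ (and for the empty-square scalar) the translation sends $g$ to the syntactically identical ZW-generator, which carries the same linear map under $\llbracket\cdot\rrbracket$; these cases are immediate. For the green spider $R_{Z,\overrightarrow{a}}^{(n,m)}$ the translation is the ZW white spider with the same label $\overrightarrow{a}$, and the two copies of the formula $\sum_{j=0}^{d-1}a_j\ket{j}^{\otimes m}\bra{j}^{\otimes n}$ defining the ZX-interpretation and the ZW-interpretation are literally the same, so again there is nothing to check.

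The substantive work is in the four remaining generators $H$, $H^{\dagger}$, $T_g$ and $T_g^{-1}$. For each of these I would take the ZW-diagram appearing on the right-hand side of the corresponding translation equation and compute its standard interpretation by contracting the tensor network, using the interpretations of the $W$ node $\ket{00}\bra{0}+\sum_{i=1}^{d-1}(\ket{0i}+\ket{i0})\bra{i}$, of the black dot and codot $\ket{0}$ and $\bra{0}$, of the braidings $\sum_{j,k}\xi^{jk}\ket{jk}\bra{kj}$ and $\sum_{j,k}\bar\xi^{jk}\ket{jk}\bra{kj}$, and of the white spiders and phases (in particular the white phase labelled $\overrightarrow{-1}$). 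One then checks that the resulting matrices are, respectively, $\sum_{j,k}\xi^{jk}\ket{j}\bra{k}$, its conjugate $\sum_{j,k}\bar\xi^{jk}\ket{j}\bra{k}$, $I_d+\sum_{i=1}^{d-1}\ket{0}\bra{i}$ and $I_d-\sum_{i=1}^{d-1}\ket{0}\bra{i}$, which match the ZX-interpretations of $H$, $H^{\dagger}$, $T_g$ and $T_g^{-1}$. These are finite, mechanical computations; the main obstacle will be the $H$ and $H^{\dagger}$ cases, since this is the only place where the Fourier phases $\xi^{jk}$ must be synthesised on the ZW side out of braidings, and one must keep careful track of the contraction order and of the self-dual compact structure to be sure the result is exactly the Fourier matrix rather than its transpose, its inverse, or a nonzero scalar multiple of it.

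Finally, with the generator cases established the inductive step is automatic: if $\llbracket\llbracket D_i\rrbracket_{XW}\rrbracket=\llbracket D_i\rrbracket$ for $i=1,2$, then strict monoidality of $\llbracket\cdot\rrbracket_{XW}$ and of $\llbracket\cdot\rrbracket$ yields the same identity for $D_1\circ D_2$ and $D_1\otimes D_2$, so it holds for all $D$.
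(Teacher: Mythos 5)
Your proposal is correct and follows essentially the same route as the paper: reduce the claim to the generators using the fact that both $\left\llbracket \cdot \right\rrbracket_{XW}$ and $\left\llbracket \cdot \right\rrbracket$ preserve sequential and tensor composition, then verify each generator case by case (the paper itself declares this generator check routine and omits the details, just as you leave the $H$, $H^{\dagger}$, $T_g$, $T_g^{-1}$ contractions as mechanical computations). Your additional remarks on which cases are trivial and where the Fourier phases must be synthesised are a reasonable elaboration, not a departure from the paper's argument.
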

 
  \begin{proof}
  Since each diagram is generated by generators, it suffices to prove the lemma only for all the generators. This is a routine check for which we omit the verification  details here. 
   \end{proof}
   
  Now we give a translation in another direction: from the non-anyonic qudit ZW-calculus to the qudit ZX-calculus which is denoted as $ \left\llbracket \cdot \right\rrbracket_{WX}$. 
   
 \[
  \left\llbracket%
\beginpgfgraphicnamed{TikZit/Id}
}
\endpgfgraphicnamed\right\rrbracket_{WX}=%
\beginpgfgraphicnamed{TikZit/Id}
}
\endpgfgraphicnamed,  \quad
 \left\llbracket%
\beginpgfgraphicnamed{TikZit/swap}
\InputIfFileExists{TikZit/swap.tikz}{}{\input{./figures/TikZit/swap.tikz}}
\endpgfgraphicnamed\right\rrbracket_{WX}=%
\beginpgfgraphicnamed{TikZit/swap}
\InputIfFileExists{TikZit/swap.tikz}{}{\input{./figures/TikZit/swap.tikz}}
\endpgfgraphicnamed,\quad
  \left\llbracket%
\beginpgfgraphicnamed{TikZit/cap}
}
\endpgfgraphicnamed\right\rrbracket_{WX}=%
\beginpgfgraphicnamed{TikZit/cap}
}
\endpgfgraphicnamed, \quad
   \left\llbracket%
\beginpgfgraphicnamed{TikZit/cup}
}
\endpgfgraphicnamed\right\rrbracket_{WX}=%
\beginpgfgraphicnamed{TikZit/cup}
}
\endpgfgraphicnamed,  
\]  

\[
  \left\llbracket%
\beginpgfgraphicnamed{TikZit/whitespiderdit}
\InputIfFileExists{TikZit/whitespiderdit.tikz}{}{\input{./figures/TikZit/whitespiderdit.tikz}}
\endpgfgraphicnamed\right\rrbracket_{WX}= %
\beginpgfgraphicnamed{TikZit/generalgreenspiderqdit2}
\InputIfFileExists{TikZit/generalgreenspiderqdit2.tikz}{}{\input{./figures/TikZit/generalgreenspiderqdit2.tikz}}
\endpgfgraphicnamed, \quad \left\llbracket%
\beginpgfgraphicnamed{TikZit/blacktriangledit}
}
\endpgfgraphicnamed\right\rrbracket_{WX}= %
\beginpgfgraphicnamed{TikZit/winzxdit}
\InputIfFileExists{TikZit/winzxdit.tikz}{}{\input{./figures/TikZit/winzxdit.tikz}}
\endpgfgraphicnamed, \quad
  \left\llbracket%
\beginpgfgraphicnamed{TikZit/emptysquare}
\InputIfFileExists{TikZit/emptysquare.tikz}{}{\input{./figures/TikZit/emptysquare.tikz}}
\endpgfgraphicnamed\right\rrbracket_{WX}=%
\beginpgfgraphicnamed{TikZit/emptysquare}
\InputIfFileExists{TikZit/emptysquare.tikz}{}{\input{./figures/TikZit/emptysquare.tikz}}
\endpgfgraphicnamed,
\]

\[
  \left\llbracket%
\beginpgfgraphicnamed{TikZit/braid}
\InputIfFileExists{TikZit/braid.tikz}{}{\input{./figures/TikZit/braid.tikz}}
\endpgfgraphicnamed\right\rrbracket_{WX}= %
\beginpgfgraphicnamed{TikZit/crossinzxdit}
\InputIfFileExists{TikZit/crossinzxdit.tikz}{}{\input{./figures/TikZit/crossinzxdit.tikz}}
\endpgfgraphicnamed, \quad
\left\llbracket%
\beginpgfgraphicnamed{TikZit/braidd2}
\InputIfFileExists{TikZit/braidd2.tikz}{}{\input{./figures/TikZit/braidd2.tikz}}
\endpgfgraphicnamed\right\rrbracket_{WX}= %
\beginpgfgraphicnamed{TikZit/crossinverseinzxdit}
\InputIfFileExists{TikZit/crossinverseinzxdit.tikz}{}{\input{./figures/TikZit/crossinverseinzxdit.tikz}}
\endpgfgraphicnamed, \]
\[
 \llbracket D_1\otimes D_2  \rrbracket_{WX} =  \llbracket D_1  \rrbracket \otimes  \llbracket  D_2  \rrbracket, \quad 
 \llbracket D_1\circ D_2  \rrbracket_{WX} =  \llbracket D_1  \rrbracket \circ  \llbracket  D_2  \rrbracket,
\]
 where $\overrightarrow{a}=(a_1, \cdots, a_{d-1})$.
 
Similar to lemma \ref{xtowpreserve}, the  translation $ \left\llbracket \cdot \right\rrbracket_{WX}$ also preserves the standard interpretation:

 \begin{lemma}\label{wtoxpreserve}
 For any diagram $D$ produced from the qudit ZW-calculus, we have $ \left\llbracket  \left\llbracket D \right\rrbracket_{WX}\right\rrbracket= \left\llbracket D \right\rrbracket$.
 \end{lemma}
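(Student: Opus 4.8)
The plan is to mimic the proof of Lemma~\ref{xtowpreserve}. Both the translation $\llbracket \cdot \rrbracket_{WX}$ and the standard interpretation $\llbracket \cdot \rrbracket$ are defined on generators and then extended so as to commute with $\otimes$ and $\circ$; hence $D \mapsto \llbracket \llbracket D \rrbracket_{WX} \rrbracket$ and $D \mapsto \llbracket D \rrbracket$ are both monoidal functors on the free PROP of qudit ZW-diagrams, and two such functors coincide as soon as they agree on every generator. So first I would reduce the statement to checking the identity on the ZW generators: $\mathbb I$, $\sigma$, $C_a$, $C_u$, the white spider $R_{Z,\overrightarrow{a}}^{(n,m)}$, the node $W$, the two braidings $\tau$ and $\tau^{-1}$, and the empty square.

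Next I would dispatch the easy cases. For $\mathbb I$, $\sigma$, $C_a$, $C_u$ and the empty square the translation is literally the corresponding ZX generator, which has been assigned the same matrix as its ZW namesake, so nothing is required. For the white spider the translation is the green spider with the same parameter vector $\overrightarrow{a}$, and both interpretations are $\sum_{j=0}^{d-1} a_j \ket{j}^{\otimes m} \bra{j}^{\otimes n}$ by definition, so this case is also immediate.

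That leaves the genuinely computational cases: the $W$ node and the two braidings. For $\tau$ I would expand the ZX diagram $\llbracket \tau \rrbracket_{WX}$, compose the matrices of its constituent green and red spiders and Hadamard gates using the interpretations already tabulated, and check the product equals $\sum_{j,k=0}^{d-1} \xi^{jk} \ket{jk} \bra{kj}$; the case of $\tau^{-1}$ is identical with $\xi$ replaced by $\bar\xi$. For $W$ I would similarly expand $\llbracket W \rrbracket_{WX}$ and verify that the resulting linear map is $\ket{00}\bra{0} + \sum_{i=1}^{d-1}(\ket{0i} + \ket{i0})\bra{i}$. I expect the $W$ case to be the main obstacle, since its ZX encoding involves triangle gates $I_d + \sum_{i\ge 1} \ket{0}\bra{i}$ together with Hadamard and phase factors, and getting these composed in the correct order and simplified requires some bookkeeping. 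As in Lemma~\ref{xtowpreserve}, however, each of these is a finite and routine check, so I would record the conclusion and omit the detailed verifications.
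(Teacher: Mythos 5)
Your proposal matches the paper's own treatment: the paper proves this lemma exactly as it proves Lemma~\ref{xtowpreserve}, namely by noting that both $\left\llbracket \cdot \right\rrbracket_{WX}$ and $\left\llbracket \cdot \right\rrbracket$ respect $\otimes$ and $\circ$, so it suffices to verify the identity on the generators, with the verification left as a routine check. Your extra remarks singling out $W$, $\tau$ and $\tau^{-1}$ as the only computationally nontrivial cases are a sound refinement of the same argument, not a different route.
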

 
 From ZX diagrams to ZW diagrams and then back, you got the same original diagrams, i.e., it is an isometry. 
  \begin{lemma}\label{zxtozwtozx}
 For any diagram $D$ produced from the qudit ZX-calculus, we have $ \left\llbracket  \left\llbracket D \right\rrbracket_{XW}\right\rrbracket_{WX}=D$.
 \end{lemma}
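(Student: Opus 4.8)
The plan is to reduce the claim to a check on generators. Both $\llbracket\cdot\rrbracket_{XW}$ and $\llbracket\cdot\rrbracket_{WX}$ are specified on generators and extended so as to commute with $\circ$ and $\otimes$, so the round-trip $\llbracket\llbracket\cdot\rrbracket_{XW}\rrbracket_{WX}$ sends a ZX-diagram $D$ to the diagram obtained by replacing each generator occurring in $D$ by its own round-trip image. Since every qudit ZX-diagram is built from the generators of the qudit ZX-calculus listed above using $\circ$ and $\otimes$, congruence reduces the lemma to proving $\llbracket\llbracket g\rrbracket_{XW}\rrbracket_{WX}=g$ for each such generator $g$, with $=$ denoting provable equality in the qudit ZX-calculus of \cite{wangqufinite}.

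For the structural generators $\mathbb I$, $\sigma$, $C_a$, $C_u$ and the empty square, both translations act as the identity, so the round-trip returns literally the same diagram; and for the green spider $R^{(n,m)}_{Z,\vec a}$ the first translation returns the white ZW spider carrying the same data $\vec a$, which the second translation sends straight back to $R^{(n,m)}_{Z,\vec a}$. Thus only the four generators $H$, $H^\dagger$, $T_g$, $T_g^{-1}$ remain. For each of these I would (i) expand its image under $\llbracket\cdot\rrbracket_{XW}$ into an explicit ZW diagram built from white spiders and the braiding $\tau$; (ii) apply $\llbracket\cdot\rrbracket_{WX}$, which turns every $\tau$ into the ZX gadget assigned to the braiding and every white ZW spider into a green spider; and (iii) normalise the resulting ZX diagram back to the original generator using the spider-fusion, colour-change, Hadamard/Euler and triangle rules of \cite{wangqufinite}. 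The cases $H^\dagger$ and $T_g^{-1}$ are mirror images of $H$ and $T_g$, or can be recovered from them together with $H H^\dagger=\mathbb I$, $T_g T_g^{-1}=\mathbb I$ and functoriality.

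The hard part is step (iii) for $H$ and for $T_g$: because the ZW braiding $\tau$ has a genuinely composite preimage under $\llbracket\cdot\rrbracket_{WX}$, after the round-trip these generators re-emerge as tangled ZX diagrams, so recovering the single generator needs a nontrivial chain of ZX rewrites rather than an immediate cancellation, and this is where the bulk of the verification sits. (If the qudit ZX-calculus of \cite{wangqufinite} were known to be complete for the standard interpretation this step would collapse: by Lemmas~\ref{xtowpreserve} and~\ref{wtoxpreserve} the round-trip preserves $\llbracket\cdot\rrbracket$, so $\llbracket\llbracket D\rrbracket_{XW}\rrbracket_{WX}$ and $D$ would be semantically equal and hence provably equal; without completeness, the generator-wise check above is the route.)
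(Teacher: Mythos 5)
Your proposal follows essentially the same route as the paper: reduce to generators using the fact that both translations preserve $\circ$ and $\otimes$, dismiss the common structural generators $\mathbb I$, $\sigma$, $C_a$, $C_u$ (and the empty diagram) and the green spider immediately, and then verify the round-trip for $H$, $H^\dagger$, $T_g$, $T_g^{-1}$ case by case. The only difference is that the paper actually carries out those four diagrammatic checks (each as a short chain of equalities ending in the original generator), whereas you leave them as an outlined verification, so your argument is correct in strategy but stops short of the explicit computations the paper supplies.
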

   \begin{proof}
 Since both $ \left\llbracket \cdot \right\rrbracket_{XW}$ and $ \left\llbracket \cdot \right\rrbracket_{WX}$ preserve the sequential and tensor compositions,  we only need to prove for all the generators. Also the identity $\mathbb I$, swap $\sigma$, cap  $C_a$ and cup $C_u$ are the common generators of ZX-calculus and ZW-calculus, so we just need to consider the remaining generators of ZX-calculus.
 
  \[ 
  \left\llbracket  \left\llbracket %
\beginpgfgraphicnamed{TikZit/generalgreenspiderqdit2}
\InputIfFileExists{TikZit/generalgreenspiderqdit2.tikz}{}{\input{./figures/TikZit/generalgreenspiderqdit2.tikz}}
\endpgfgraphicnamed  \right\rrbracket_{XW}\right\rrbracket_{WX}=  \left\llbracket %
\beginpgfgraphicnamed{TikZit/whitespiderdit}
\InputIfFileExists{TikZit/whitespiderdit.tikz}{}{\input{./figures/TikZit/whitespiderdit.tikz}}
\endpgfgraphicnamed \right\rrbracket_{WX}= %
\beginpgfgraphicnamed{TikZit/generalgreenspiderqdit2}
\InputIfFileExists{TikZit/generalgreenspiderqdit2.tikz}{}{\input{./figures/TikZit/generalgreenspiderqdit2.tikz}}
\endpgfgraphicnamed  
  \]
 
   \[ 
  \left\llbracket  \left\llbracket %
\beginpgfgraphicnamed{TikZit/HadaDecomSingleslt}
}
\endpgfgraphicnamed  \right\rrbracket_{XW}\right\rrbracket_{WX}=  \left\llbracket %
\beginpgfgraphicnamed{TikZit/htranslate}
\InputIfFileExists{TikZit/htranslate.tikz}{}{\input{./figures/TikZit/htranslate.tikz}}
\endpgfgraphicnamed \right\rrbracket_{WX}= %
\beginpgfgraphicnamed{TikZit/Hadamardwhiteint2}
\InputIfFileExists{TikZit/Hadamardwhiteint2.tikz}{}{\input{./figures/TikZit/Hadamardwhiteint2.tikz}}
\endpgfgraphicnamed =  %
\beginpgfgraphicnamed{TikZit/HadaDecomSingleslt}
}
\endpgfgraphicnamed 
  \]

   \[ 
  \left\llbracket  \left\llbracket %
\beginpgfgraphicnamed{TikZit/RGg_Hadad}
}
\endpgfgraphicnamed  \right\rrbracket_{XW}\right\rrbracket_{WX}=  \left\llbracket %
\beginpgfgraphicnamed{TikZit/hinversetranslate}
\InputIfFileExists{TikZit/hinversetranslate.tikz}{}{\input{./figures/TikZit/hinversetranslate.tikz}}
\endpgfgraphicnamed \right\rrbracket_{WX}= %
\beginpgfgraphicnamed{TikZit/Hadamardinvwhiteint}
\InputIfFileExists{TikZit/Hadamardinvwhiteint.tikz}{}{\input{./figures/TikZit/Hadamardinvwhiteint.tikz}}
\endpgfgraphicnamed =  %
\beginpgfgraphicnamed{TikZit/RGg_Hadad}
}
\endpgfgraphicnamed 
  \]
 
    \[ 
  \left\llbracket  \left\llbracket %
\beginpgfgraphicnamed{TikZit/triangle}
}
\endpgfgraphicnamed  \right\rrbracket_{XW}\right\rrbracket_{WX}=  \left\llbracket %
\beginpgfgraphicnamed{TikZit/triangletranslate}
\InputIfFileExists{TikZit/triangletranslate.tikz}{}{\input{./figures/TikZit/triangletranslate.tikz}}
\endpgfgraphicnamed \right\rrbracket_{WX}= %
\beginpgfgraphicnamed{TikZit/tranglexwx}
\InputIfFileExists{TikZit/tranglexwx.tikz}{}{\input{./figures/TikZit/tranglexwx.tikz}}
\endpgfgraphicnamed =  %
\beginpgfgraphicnamed{TikZit/triangle}
}
\endpgfgraphicnamed 
  \]
    \[ 
  \left\llbracket  \left\llbracket %
\beginpgfgraphicnamed{TikZit/triangleinv}
}
\endpgfgraphicnamed  \right\rrbracket_{XW}\right\rrbracket_{WX}=  \left\llbracket %
\beginpgfgraphicnamed{TikZit/invtriangletranslate}
\InputIfFileExists{TikZit/invtriangletranslate.tikz}{}{\input{./figures/TikZit/invtriangletranslate.tikz}}
\endpgfgraphicnamed \right\rrbracket_{WX}= %
\beginpgfgraphicnamed{TikZit/trangleinvxwx}
\InputIfFileExists{TikZit/trangleinvxwx.tikz}{}{\input{./figures/TikZit/trangleinvxwx.tikz}}
\endpgfgraphicnamed =  %
\beginpgfgraphicnamed{TikZit/triangleinv}
}
\endpgfgraphicnamed 
  \]
 
 \end{proof}
 
  \begin{lemma}\label{zwsoundness}
 The non-anyonic qudit ZW-calculus is sound, i.e., all the ZW rules as given in Figure \ref{quditrules1},  \ref{quditrules2},  \ref{quditrules3}    still  hold under the standard interpretation  $ \left\llbracket \cdot \right\rrbracket$.
  \end{lemma}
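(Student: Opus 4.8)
The plan is to exploit that the standard interpretation $\llbracket\cdot\rrbracket$ is defined on generators and then extended strictly monoidally, $\llbracket D_1\otimes D_2\rrbracket=\llbracket D_1\rrbracket\otimes\llbracket D_2\rrbracket$ and $\llbracket D_1\circ D_2\rrbracket=\llbracket D_1\rrbracket\circ\llbracket D_2\rrbracket$; hence a rule $D_1=D_2$ is sound precisely when $\llbracket D_1\rrbracket$ and $\llbracket D_2\rrbracket$ agree as linear maps between tensor powers of $\mathbb C^d$. So the statement reduces to checking the finitely many rules of Figures~\ref{quditrules1}, \ref{quditrules2} and \ref{quditrules3} one at a time, in each case expanding both sides in the computational basis and comparing matrix entries. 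I would organise the work into three blocks, one per figure, displaying in each only one or two representative computations.

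For Figure~\ref{quditrules1} (braiding rules) only $\tau,\tau^{-1},\sigma$, the cap/cup and the $W$ node appear. Since $\llbracket\tau\rrbracket$ acts as $\ket{ab}\mapsto\xi^{ab}\ket{ba}$ and $\llbracket\tau^{-1}\rrbracket$ as $\ket{ab}\mapsto\bar\xi^{ab}\ket{ba}$, rules (Bd1)--(Bd4) follow from $\xi\bar\xi=1$ together with a Yang--Baxter-type reindexing; (Nat1)--(Nat4) hold because conjugating by the symmetry $\sigma$ returns the same family of matrices; and (Nat5)--(Nat6) follow by bending a leg of $\llbracket\tau\rrbracket$ with the self-dual compact structure $\sum_j\ket{jj}$, $\sum_j\bra{jj}$, which again produces a crossing. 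For Figure~\ref{quditrules2} (W rules), the matrix $\llbracket W\rrbracket=\ket{00}\bra{0}+\sum_{i\ge1}(\ket{0i}+\ket{i0})\bra{i}$ is visibly invariant under swapping its outputs, giving (Syb); composing one output with $\bra{0}$ collapses it to $I_d$, giving (Unt); the two bracketings of the three-legged black spider give the $m$-legged black-spider matrix displayed after the rules, giving (Aso); and (Cpy) is $\llbracket W\rrbracket\ket{0}=\ket{00}$. For (Wnt1), (Wnt2), (Wsm), (Bsm) one pushes a braid (or two) through the $W$ node, which works entry by entry because the nonzero entries of $\llbracket W\rrbracket$ are all $1$ and a crossing only contributes a factor $\xi^{ab}$; (Bhf) uses the compact structure together with the fact that the white phase $\overrightarrow{-1}$ interprets as $\ket{0}\bra{0}-\sum_{i\ge1}\ket{i}\bra{i}$, which acts as an antipode, so that the black Hopf composite collapses to $\ket{0}\bra{0}$; and (Ept) is the normalisation $\llbracket\,\cdot\,\rrbracket=1$ of the empty diagram.

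For Figure~\ref{quditrules3} (mixed rules), (S1)--(S3) are the usual spider fusion for the diagonal maps $\sum_j a_j\ket{j}^{\otimes m}\bra{j}^{\otimes n}$: composing along a shared wire multiplies the coefficient vectors pointwise, reproducing $\overrightarrow{ab}$ and the correct leg counts. (Wps) is obtained by closing a braid loop on a white phase node, i.e. by taking the partial trace of $\llbracket\tau\rrbracket$ precomposed with the phase node, which inserts exactly the factor $\xi^{k^2}$ at the basis element $\ket{k}$, matching the decoration $2\overrightarrow{\tau}=(\xi,\dots,\xi^{k^2},\dots,\xi^{(d-1)^2})$. The remaining rules (Baw) (white--black bialgebra), (WBh) (white--black Hopf, which as noted in the following Remark only needs two wires), (WBs), (Pcp), (AD) and (Pcp2) are again finite basis-level identities; the two genuinely involved ones are (Baw) and (WBh), where each side expands into a double or triple sum over $\{0,\dots,d-1\}$ and one must check that, because the black spiders only ever carry the all-$0$ pattern or a single excitation, that excitation is routed identically on both sides.

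The main obstacle is not depth but bookkeeping: the rules that mix a braid with a node --- (Wps), (Wsm), (Bsm) --- and the bialgebra/Hopf rules (Baw), (WBh), since there both sides are sums of several terms and one must line up indices and $\xi$-phases; in particular (Wps) is the one place where the quadratic exponent $k^2$ must be tracked through the loop closure. Everything else is read off directly from the explicit matrices in the Interpretation paragraph. Since the verification is entirely routine, in the write-up I would proceed as with Lemmas~\ref{xtowpreserve} and \ref{wtoxpreserve}, displaying a couple of representative cases and noting that the remaining rules are checked in the same way.
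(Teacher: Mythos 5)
Your proposal is correct and takes essentially the same route as the paper: the paper's proof of this lemma is just the observation that soundness "can be proved by direct calculation or the translation $\left\llbracket \cdot \right\rrbracket_{WX}$ with the soundness of qudit ZX-calculus," with all details left to the reader, and your rule-by-rule basis-level verification is a fleshed-out version of the first of those two options. The only element you do not mention is the paper's alternative shortcut of pushing each rule through $\left\llbracket \cdot \right\rrbracket_{WX}$ and invoking soundness of the qudit ZX-calculus, but that is offered only as an option, not carried out either.
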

  
  This can be proved by direct calculation or the translation $ \left\llbracket \cdot \right\rrbracket_{WX}$ with the soundness of qudit ZX-calculus. We leave the details to the interested readers. 
Universality  for qudit ZW/ZX-calculus means any matrix $A$ of size $d^m \times d^n$ can be represented by a ZW/ZX  diagram $D$ such that $ \left\llbracket D \right\rrbracket = A$. Since qudit ZX-calculus is universal \cite{wangqufinite} and all the ZX generators can be translated to ZW diagrams via  $ \left\llbracket \cdot \right\rrbracket_{XW}$ without changing the standard interpretation, we have

\begin{theorem}
 The non-anyonic qudit ZW-calculus is universal.
 \end{theorem}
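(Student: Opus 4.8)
The plan is to obtain the universality of the non-anyonic qudit ZW-calculus as an immediate corollary of the universality of the qudit ZX-calculus from \cite{wangqufinite}, transported along the interpretation-preserving translation $\llbracket\cdot\rrbracket_{XW}$ of Lemma~\ref{xtowpreserve}. The logical chain is: universality of ZX $\Rightarrow$ every matrix is the interpretation of some ZX-diagram $\Rightarrow$ translate that diagram into a ZW-diagram $\Rightarrow$ Lemma~\ref{xtowpreserve} guarantees the interpretation is unchanged, so the same matrix is the interpretation of a ZW-diagram.

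Concretely, first I would fix an arbitrary matrix $A$ of size $d^m\times d^n$, with $m,n\in\mathbb N$ arbitrary. By the universality of the qudit ZX-calculus there exists a ZX-diagram $D$ with $\llbracket D\rrbracket = A$. Next I would form the ZW-diagram $D' := \llbracket D\rrbracket_{XW}$; this is well-defined because $\llbracket\cdot\rrbracket_{XW}$ is specified on each ZX-generator (the Z/green spider $R_{Z,\vec a}$, the gates $H$ and $H^{\dagger}$, the triangle $T_g$ and its inverse $T_g^{-1}$, the structural generators $\mathbb I,\sigma,C_a,C_u$, and the empty diagram) as a genuine ZW-diagram, and it is defined to commute with both $\circ$ and $\otimes$; hence it sends the whole compound diagram $D$ to a legitimate ZW-diagram. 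Finally, Lemma~\ref{xtowpreserve} yields $\llbracket D'\rrbracket = \llbracket\,\llbracket D\rrbracket_{XW}\,\rrbracket = \llbracket D\rrbracket = A$. Since $A$ was arbitrary, every $d^m\times d^n$ matrix is represented by a ZW-diagram, which is precisely the statement of universality.

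There is essentially no hard step in this argument itself; it is bookkeeping once the two inputs are in hand. The only place where genuine work is hidden is Lemma~\ref{xtowpreserve}, i.e.\ the verification that $\llbracket\cdot\rrbracket_{XW}$ preserves the standard interpretation on each ZX-generator — in particular that the ZW-diagrams assigned to $H$, $H^{\dagger}$, $T_g$ and $T_g^{-1}$ evaluate to $\sum_{j,k}\xi^{jk}\ket{j}\bra{k}$, its complex conjugate, $I_d+\sum_{i\ge 1}\ket{0}\bra{i}$ and $I_d-\sum_{i\ge 1}\ket{0}\bra{i}$ respectively — and that is exactly the routine generator-by-generator check already invoked in the proof of that lemma. (One could alternatively route through soundness and the faithful embedding $\llbracket\cdot\rrbracket_{WX}\circ\llbracket\cdot\rrbracket_{XW}=\mathrm{id}$ of Lemmas~\ref{zwsoundness} and \ref{zxtozwtozx}, but neither is needed here: only the preservation statement of Lemma~\ref{xtowpreserve} is used.)
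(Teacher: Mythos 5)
Your argument is exactly the paper's: universality of the qudit ZX-calculus combined with the interpretation-preserving translation $\left\llbracket \cdot \right\rrbracket_{XW}$ of Lemma~\ref{xtowpreserve}, just spelled out diagram-by-diagram. It is correct and matches the paper's proof, with the genuine work indeed residing in the generator-by-generator check behind Lemma~\ref{xtowpreserve}.
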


\section{Qufinite ZW-calculus as unified qudit ZW-calculi}
Similar to the qudit ZX-calculus \cite{wangqufinite}, we can also have a unified framework of all qudit ZW-calcului called qufinite ZW-calculus. 

The way to realise the idea of qufinite ZW-calculus is the same as that for qufinite ZX-calculus: label each wire or node with its dimension and add two new generators called dimension-splitter and dimension-binder respectively  \cite{wangqufinite}.  As usual,  a wire labelled with 1 will be depicted as empty. Now we give the generators for qufinite ZW-calculus as follows.

\begin{table}[!h]
\begin{center} 
\begin{tabular}{|r@{~}r@{~}c@{~}c|r@{~}r@{~}c@{~}c|}
\hline
&& & %
\beginpgfgraphicnamed{TikZit//whitespiderditd}
\InputIfFileExists{TikZit//whitespiderditd.tikz}{}{\input{./figures/TikZit//whitespiderditd.tikz}}
\endpgfgraphicnamed  & &&& %
\beginpgfgraphicnamed{TikZit//blacktriangleditd}
\InputIfFileExists{TikZit//blacktriangleditd.tikz}{}{\input{./figures/TikZit//blacktriangleditd.tikz}}
\endpgfgraphicnamed\\\hline
&& & %
\beginpgfgraphicnamed{TikZit//braidld}
\InputIfFileExists{TikZit//braidld.tikz}{}{\input{./figures/TikZit//braidld.tikz}}
\endpgfgraphicnamed  & &&& %
\beginpgfgraphicnamed{TikZit//braidinvld}
\InputIfFileExists{TikZit//braidinvld.tikz}{}{\input{./figures/TikZit//braidinvld.tikz}}
\endpgfgraphicnamed\\\hline
&& & %
\beginpgfgraphicnamed{TikZit//idqudit}
\InputIfFileExists{TikZit//idqudit.tikz}{}{\input{./figures/TikZit//idqudit.tikz}}
\endpgfgraphicnamed  & &&& %
\beginpgfgraphicnamed{TikZit//swapd}
\InputIfFileExists{TikZit//swapd.tikz}{}{\input{./figures/TikZit//swapd.tikz}}
\endpgfgraphicnamed\\\hline
&& & %
\beginpgfgraphicnamed{TikZit//capdit}
\InputIfFileExists{TikZit//capdit.tikz}{}{\input{./figures/TikZit//capdit.tikz}}
\endpgfgraphicnamed  & &&& %
\beginpgfgraphicnamed{TikZit//cupdit}
\InputIfFileExists{TikZit//cupdit.tikz}{}{\input{./figures/TikZit//cupdit.tikz}}
\endpgfgraphicnamed\\\hline
&& & %
\beginpgfgraphicnamed{TikZit//binderdit}
\InputIfFileExists{TikZit//binderdit.tikz}{}{\input{./figures/TikZit//binderdit.tikz}}
\endpgfgraphicnamed  & &&& %
\beginpgfgraphicnamed{TikZit//binderditflip}
\InputIfFileExists{TikZit//binderditflip.tikz}{}{\input{./figures/TikZit//binderditflip.tikz}}
\endpgfgraphicnamed\\\hline
\end{tabular} \caption{Generators of qufinite ZX-calculus, where  the two diagrams at the bottom of the table of generators are called  dimension-binder and  dimension-splitter respectively,  $d, m,n\in \mathbb N, d\geq 2; \protect\overrightarrow{\alpha_d}=(a_1,\cdots, a_{d-1}); a_i\in \mathbb C; i \in \{1,\cdots, d-1\}; j \in \{0, 1,\cdots, d-1\}; s, t \in \mathbb N \backslash\{0\}$. }\label{qbzxgeneratordit}
\end{center}
\end{table}

\FloatBarrier

The rules of qufinite ZW-calculus are of two kinds: those involving  dimension-splitter and dimension-binder  and the others which are qudit ZW rules for any given dimension $d$. Below we only give the rules  including  dimension-splitter and dimension-binder.  
 \begin{figure}[!h]
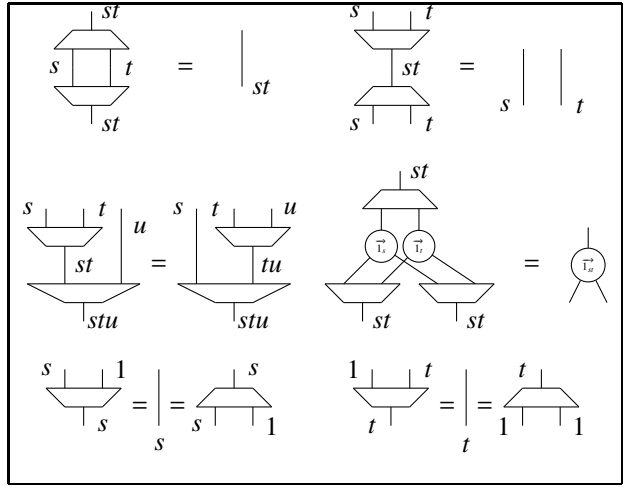

\begin{center} 
\[
\quad \qquad\begin{array}{|cc|}
\hline
\beginpgfgraphicnamed{TikZit//binderunitary1}
\InputIfFileExists{TikZit//binderunitary1.tikz}{}{\input{./figures/TikZit//binderunitary1.tikz}}
\endpgfgraphicnamed&%
\beginpgfgraphicnamed{TikZit//binderunitary2}
\InputIfFileExists{TikZit//binderunitary2.tikz}{}{\input{./figures/TikZit//binderunitary2.tikz}}
\endpgfgraphicnamed\\
    &\\ 
\beginpgfgraphicnamed{TikZit//binderassoc}
\InputIfFileExists{TikZit//binderassoc.tikz}{}{\input{./figures/TikZit//binderassoc.tikz}}
\endpgfgraphicnamed&%
\beginpgfgraphicnamed{TikZit//binderwspider}
\InputIfFileExists{TikZit//binderwspider.tikz}{}{\input{./figures/TikZit//binderwspider.tikz}}
\endpgfgraphicnamed\\
    &\\ 
\beginpgfgraphicnamed{TikZit//binderwith1rt}
\InputIfFileExists{TikZit//binderwith1rt.tikz}{}{\input{./figures/TikZit//binderwith1rt.tikz}}
\endpgfgraphicnamed&%
\beginpgfgraphicnamed{TikZit//binderwith1lt}
\InputIfFileExists{TikZit//binderwith1lt.tikz}{}{\input{./figures/TikZit//binderwith1lt.tikz}}
\endpgfgraphicnamed\\
    &\\ 
  		  		\hline  
  		\end{array}\]      
  	\end{center}
  	\caption{Qufinite ZW-calculus rules involving dimension-splitter and dimension-binder,  where $\protect\overrightarrow{1}_m=\protect\overbrace{(1,\cdots,1)}^{m-1}, m, s, t, u \in \mathbb N \backslash\{0\}.$}\label{qufinitezwrules2}
  \end{figure}  
  
 \FloatBarrier

Now we give the standard interpretation for the new generators of qufinite ZW-calculus.
\[
 \left\llbracket%
\beginpgfgraphicnamed{TikZit//binderdit}
\InputIfFileExists{TikZit//binderdit.tikz}{}{\input{./figures/TikZit//binderdit.tikz}}
\endpgfgraphicnamed\right\rrbracket= \sum_{k=0}^{s-1}\sum_{l=0}^{t-1}\ket{kt+l}\bra{kl}, 
 \quad \quad
  \left\llbracket%
\beginpgfgraphicnamed{TikZit//binderditflip}
\InputIfFileExists{TikZit//binderditflip.tikz}{}{\input{./figures/TikZit//binderditflip.tikz}}
\endpgfgraphicnamed\right\rrbracket= \sum_{k=0}^{st-1}\ket{[\frac{k}{t}]}\ket{k-t[\frac{k}{t}]}\bra{k}, \quad\quad   \left\llbracket%
\beginpgfgraphicnamed{TikZit//swapd}
\InputIfFileExists{TikZit//swapd.tikz}{}{\input{./figures/TikZit//swapd.tikz}}
\endpgfgraphicnamed\right\rrbracket=\sum_{k=0}^{s-1}\sum_{l=0}^{t-1}\ket{kl}\bra{lk},
 \]

\[  \llbracket D_1\otimes D_2  \rrbracket =  \llbracket D_1  \rrbracket \otimes  \llbracket  D_2  \rrbracket, \quad 
 \llbracket D_1\circ D_2  \rrbracket =  \llbracket D_1  \rrbracket \circ  \llbracket  D_2  \rrbracket,
  \]
where 
$s, t \in \mathbb N \backslash\{0\},   \bra{i} =\overbrace{(\underbrace{0,\cdots,1}_{i+1}, \cdots, 0)}^{d}, ~ \ket{i}=(\overbrace{(\underbrace{0,\cdots,1}_{i+1}, \cdots, 0)}^{d})^T,  i \in \{0, 1,\cdots, d-1\},$  and $[r]$ is the integer part of a real number $r$.

Note that in the 1-dimensional Hilbert space $H_1=\mathbb C$, we make the convention that $\ket{0}=1$.

Again by the universality of  qufinite ZX-calculus \cite{wangqufinite} and the fact that all the ZX generators can be translated to ZW diagrams via  $ \left\llbracket \cdot \right\rrbracket_{XW}$ without changing the standard interpretation, we have


\begin{theorem}
 The  qufinite ZW-calculus is universal.
 \end{theorem}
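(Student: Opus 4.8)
The plan is to repeat, in the qufinite setting, the argument that already yielded universality of the fixed-dimension qudit ZW-calculus. First I would extend the translation $\llbracket \cdot \rrbracket_{XW}$ of Section 3 to a translation from qufinite ZX-calculus to qufinite ZW-calculus. On the generators common to both calculi — the identity $\mathbb I$, the swap $\sigma$, the cap $C_a$, the cup $C_u$, and the new dimension-binder and dimension-splitter — the translation acts as the identity, while on the genuinely ZX-specific generators — the green ($Z$) spider $R^{(n,m)}_{Z,\overrightarrow{a}}$, the Hadamard $H$, its adjoint $H^\dagger$, and the triangles $T_g, T_g^{-1}$ — it acts exactly as the fixed-dimension translation already defined, applied at each dimension label. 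Because $\llbracket \cdot \rrbracket_{XW}$ respects both $\otimes$ and $\circ$, this specification on generators determines a PROP morphism.

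Second, I would prove the qufinite analogue of Lemma \ref{xtowpreserve}: for every diagram $D$ of the qufinite ZX-calculus, $\llbracket \llbracket D \rrbracket_{XW}\rrbracket = \llbracket D \rrbracket$. Since the translation is a PROP morphism and $\llbracket \cdot \rrbracket$ is functorial, it suffices to check this on each generator. For the common generators this is immediate, as they are sent to themselves and carry the same standard interpretation in both calculi; in particular the only new cases, the dimension-binder and dimension-splitter, are handled trivially because they are fixed by the translation and their interpretations $\sum_{k,l}\ket{kt+l}\bra{kl}$ and $\sum_{k}\ket{[\tfrac{k}{t}]}\ket{k-t[\tfrac{k}{t}]}\bra{k}$ are literally the same on both sides. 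For the ZX-specific generators this is the same routine verification already invoked (and omitted) for Lemma \ref{xtowpreserve}, now carried out at the appropriate dimension.

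Third, I would invoke the universality of qufinite ZX-calculus \cite{wangqufinite}: every matrix of the relevant finite-dimensional shape is the standard interpretation $\llbracket D \rrbracket$ of some qufinite ZX diagram $D$. Applying the extended translation produces a qufinite ZW diagram $\llbracket D \rrbracket_{XW}$ with $\llbracket \llbracket D \rrbracket_{XW}\rrbracket = \llbracket D \rrbracket$, which is exactly what universality of the qufinite ZW-calculus asserts. This is a one-line consequence once the previous two steps are in place, mirroring the proof of the earlier universality theorem for the fixed-dimension case.

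The main obstacle is not conceptual but a matter of bookkeeping: one must make sure the dimension labels on wires and nodes are tracked consistently, so that a wire of dimension $d$ in a ZX diagram is sent to a wire of dimension $d$ in the ZW diagram and the fixed-dimension translations of Section 3 glue coherently across varying $d$ (in particular respecting the convention $\ket{0}=1$ in $H_1=\mathbb C$). Once $\llbracket \cdot \rrbracket_{XW}$ is seen to be a well-defined, interpretation-preserving PROP morphism at the level of generators, nothing further is needed.
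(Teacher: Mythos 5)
Your proposal matches the paper's argument: the paper likewise derives universality of qufinite ZW-calculus from the universality of qufinite ZX-calculus together with the interpretation-preserving translation $\left\llbracket \cdot \right\rrbracket_{XW}$ applied to all ZX generators (with the dimension-binder and dimension-splitter handled as you describe). Your write-up simply makes explicit the generator-by-generator checks and dimension bookkeeping that the paper leaves implicit.
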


\begin{remark}
If we remove the crossing $\tau$ and its inverse $\tau^{-1}$ from the set of generators, and the corresponding rewriting rules which involving the crossings, then we get a general ZW-calculus which works for arbitrary commutative semirings.   
\end{remark}

\section{ Future work}
In this paper, we propose a non-anyionic qudit ZW-calculus for arbitrary dimension $d$ and unify them into a single framework called qufinite ZW-calculus. By establishing a bilateral transition between ZW-calculus and ZX-calculus we obtain the universality of qudit/qufinite ZW-calculus which allows us to encode any matrix into ZW diagrams.

One obvious thing to do next is to prove the completeness of qudit ZW-calculus. This is expected to be achieved based on a proof of completeness  for qudit ZX-calculus and the translation between the two calculi. 

Another interesting thing would be applying this  non-anyionic qudit ZW-calculus to the problem of qudit entanglement classification.

Finally, we want to mention that qubit ZW-calculus has been  shown as a useful framework for conscious experience and cognition \cite{wcb2021}. So it is exciting to see how this general framework of ZW-calculus could be applied  to a new field  beyond quantum. 

  \section*{Acknowledgements} 

The author would like to acknowledge the grant FQXi-RFP-CPW-2018.  The author also thanks Konstantinos Mei for his helpful comments.

\bibliographystyle{eptcs}
\bibliography{generic} 

\end{document}